\newcommand\reallywidehat[1]{%
\savestack{\tmpbox}{\stretchto{%
  \scaleto{%
    \scalerel*[\widthof{\ensuremath{#1}}]{\kern-.6pt\bigwedge\kern-.6pt}%
    {\rule[-\textheight/2]{1ex}{\textheight}}
  }{\textheight}%
}{0.5ex}}%
\stackon[1pt]{#1}{\tmpbox}%
}
\newenvironment{myitemize}{\begin{itemize}[itemsep=-0.05cm, leftmargin=*, topsep=0.1cm]}{\end{itemize}}
\def\slasha#1{\setbox0=\hbox{$#1$}#1\hskip-\wd0\hbox to\wd0{\hss\sl/\/\hss}}
\def\periodb#1{\setbox0=\hbox{$#1$}#1\hskip-\wd0\hbox to\wd0{-}}
\newcommand{\unit}{\mathbbm{1}}   			
\newcommand{\id}{\mathrm{id}}   			
\newcommand{\ii}{\mathrm{i}}   			
\newcommand{\CA}{\mathcal{A}}    			
\newcommand{\CCA}{\mathscr{A}}
\newcommand{\CB}{\mathcal{B}}
\newcommand{\CCI}{\mathscr{I}}
\newcommand{\CCF}{\mathscr{F}}
\newcommand{\CH}{\mathcal{H}}
\newcommand{\CCH}{\mathscr{H}}
\newcommand{\CI}{\mathcal{I}}
\newcommand{\CJ}{\mathcal{J}}
\newcommand{\CN}{\mathcal{N}}
\newcommand{\CCP}{\mathscr{P}}
\newcommand{\CCT}{\mathscr{T}}
\newcommand{\CCU}{\mathscr{U}}
\newcommand{\longhookrightarrow}{\lhook\joinrel\longrightarrow}
\newcommand{\frg}{\mathfrak{g}}				
\def\tv{{\textrm{\tiny $V$}}}
\def\tv1{{\textrm{\tiny $V[1]$}}}
\newcommand{\fk}{\mathbbm{k}}     			
\newcommand{\FR}{\mathbbm{R}}     			
\newcommand{\FC}{\mathbbm{C}}     			
\newcommand{\FH}{\mathbbm{H}}     			
\newcommand{\FO}{\mathbbm{O}}     			
\newcommand{\MM}{\mathbbm{M}}     			
\newcommand{\RZ}{\mathbbm{Z}}     			
\newcommand{\dd}{\mathrm{d}}     			
\newcommand{\sEnd}{\mathsf{End}}
\newcommand{\comment}[1]{}     				
\def\tyng(#1){\hbox{\tiny$\yng(#1)$}}			
\def\tyoung(#1){\hbox{\tiny$\young(#1)$}}			
\newcommand{\beq}{\begin{eqnarray}}
\newcommand{\eeq}{\end{eqnarray}}
\newcommand{\obs}{\mathsf{Obs}}
\definecolor{outrageousorange}{rgb}{1.0, 0.43, 0.29}
\newcommand{\ttH}{\mathtt{H}}
\newcommand{\ttS}{\mathtt{S}}
\newcommand{\Tr}{\mathrm{Tr}}
\theoremstyle{definition}
\newtheorem{theorem}[equation]{Theorem}
\newtheorem{lemma}[equation]{Lemma}
\newtheorem{proposition}[equation]{Proposition}
\theoremstyle{definition}
\newtheorem{definition}[equation]{Definition}
\newtheorem{remark}[equation]{Remark}
\newtheorem{example}[equation]{Example}
\def\beq{\begin{equation}}
\def\bee{\begin{equation}}
\def\eeq{\end{equation}}
\def\bea{\begin{eqnarray}}
\def\eea{\end{eqnarray}}
\def\ba{\begin{align}}
\def\ea{\end{align}}
\numberwithin{equation}{section}
\begin{document}

\renewcommand{\thefootnote}{\fnsymbol{footnote}}

\begin{titlepage}
	
	\renewcommand{\thefootnote}{\fnsymbol{footnote}}
	
	\begin{center}
		
		\vspace{2cm}
		
		\baselineskip=24pt
		
		{\Large\bf An algebraic formulation of nonassociative quantum mechanics}
		
		\baselineskip=14pt
		
		\vspace{1cm}
		
{\bf\large Peter Schupp}${}^{\,(a),}$\footnote{Email: \ {\tt  pschupp@constructor.university}  \qquad  https://orcid.org/0000-0002-9159-6086}  \ \ and \ \ {\bf\large Richard
			J. Szabo}${}^{\,(b),}$\footnote{Email: \ {\tt R.J.Szabo@hw.ac.uk} \qquad https://orcid.org/0000-0003-3821-4891}
		\\[6mm]
		
\noindent {${}^{(a)}$ \it School of Science, Constructor University\\ Campus Ring 1, 28759 Bremen, Germany} \\[3mm]
\noindent  {${}^{(b)}$ {\it Department of Mathematics, Heriot-Watt University\\ Colin Maclaurin Building,
			Riccarton, Edinburgh EH14 4AS, U.K.}}\\ and {\it Maxwell Institute for
			Mathematical Sciences, Edinburgh, U.K.} 
		\\[20mm]
		
	\end{center}

	\begin{abstract}
		\noindent
We develop a version of quantum mechanics that can handle nonassociative algebras of observables and which reduces to standard quantum theory in the traditional associative setting. Our algebraic approach is naturally probabilistic and is based on using the universal enveloping algebra of a general nonassociative algebra to introduce a generalized notion of associative composition product. We formulate properties of states together with notions of trace, and use them to develop GNS constructions. We describe Heisenberg and Schr\"odinger pictures of completely positive dynamics, and we illustrate our formalism on the explicit examples of finite-dimensional matrix Jordan algebras as well as the octonion algebra.
	\end{abstract}

{\baselineskip=12pt
	\tableofcontents
}
	\end{titlepage}

\setcounter{footnote}{0}
\renewcommand{\thefootnote}{\arabic{footnote}}


\section{Introduction}
\label{sec:Intro}

Quantum mechanics is a probabilistic theory and so relies heavily on a notion of positivity. States of a quantum system form a convex body, and the theory is characterised by features such as superposition of amplitudes as well as entanglement. Noncommutativity is a key ingredient of the theory, and is measured by the commutator $[A,B] = A\,B-B\,A$ of operators. The noncommutative algebra of operators acts on a Hilbert space of states. Alternatively, in the algebraic approach to quantum theory, operators are encoded into $C^*$-algebras on which states are continuous linear functionals giving the expectation values of measurement outcomes; the Hilbert space picture is then recovered through the Gel'fand-Naimark-Segal (GNS) construction. These approaches both inherently involve associative algebras. 

In this paper we revisit the challenge of incorporating nonassociativity into quantum mechanics, which is measured by the associator $[A,B,C] = (A\,B)\,C - A\,(B\,C)$ of operators. We vastly generalise the treatment initiated in~\cite{Mylonas:2013jha} which developed an extension of the phase space formulation of quantum mechanics for a particular model based on nonassociative deformation quantization; see~\cite{Mylonas:2013odt} for a summary. Though usually deemed not possible, in the following we formulate a complete and proper theory of quantum mechanics for nonassociative algebras of observables. 
We demonstrate how to give them a probabilistic interpretation, treating issues such as probability, positivity and convexity, thus carefully checking that our formulation fits into the framework of \emph{generalized probability theories} (a framework for all physical theories). Our work constitutes a conservative yet mathematically rigorous and model-independent approach to the topic which goes just a little bit beyond traditional quantum mechanics in this scheme.
We show that our framework can handle algebraic expressions and observables that would not be available in an associative setting. 

When is such a nonassociative extension of quantum mechanics important in physics? Generalizations of canonical commutation relations in the traditional approach lead to noncommutative geometry and appear naturally in string theory with flat $B$-fields as well as in quantum mechanics with source-free magnetic fields. For example, the momentum commutator $[p_i,p_j]=\ii\,F_{ij}$ gives rise to the Lorentz force. However, the associative approach in unable to deal with extensions of these physical scenarios to systems that involve magnetic sources. 

The prototypical example is  the quantum mechanics of an electrically charged particle moving in a magnetic monopole background, where the Jacobiator of kinematical momenta
\begin{align}
\mathtt{Jac}(p_ i,p_ j,p_ k) = [p_ i,[p_ j,p_ k]] + [p_ k,[p_ i,p_ j]] + [p_ j,[p_ k,p_ i]]
\end{align}
defines an observable for the magnetic charge-current density $j_{\rm m}$ through
\begin{align} \label{eq:Jacjm}
\tfrac13\,\mathtt{Jac}(p_ i,p_ j,p_ k)\,\dd x^ i\wedge\dd x^ j\wedge\dd x^ k = \dd F = \ast\, j_{\rm m} \ .
\end{align}
The relation \eqref{eq:Jacjm} was originally interpreted by Jackiw~\cite{Jackiw:1984rd}, who argued that the associator for three finite spatial translations generated by the kinematical momenta $p_ i$ measures the phase of the total magnetic flux, i.e. the monopole charge, and in turn the failure of the Dirac quantization condition.

In~\cite{Mylonas:2013jha} the Jacobiator was interpreted via a canonical transformation as giving a volume operator which measures coarse-graining of space in non-geometric $R$-flux backgrounds of string theory. This perspective was extended to certain M-theory $R$-flux backgrounds in~\cite{Kupriyanov:2018xji}, where the canonical transformation leads to a curved momentum space induced by a background of non-geometric Kaluza-Klein monopoles~\cite{Lust:2017bgx}. 
Further aspects of nonassociativity in quantum mechanics along these lines have been recently discussed in~\cite{Bojowald:2014oea,Bojowald:2015cha,Heninger:2018guf,Bojowald:2018qqa,Bojowald:2020whs}.

A rigorous geometric definition of Jackiw's nonassociative magnetic translation operators in terms of weak projective 2-representations of the translation group on a 2-Hilbert space was given in~\cite{Bunk:2018qvk}, and more generally in terms of a 2-group extension of the translation group in~\cite{Bunk:2020rju}. 
The controlled type of nonassociativity that appears in these examples  is not `homotopy associativity', as seen with $A_\infty$-algebras, but rather `weak associativity', as seen with algebras in braided monoidal categories~\cite{Barnes:2014ksa}. 
See~\cite{Szabo:2017yxd,Szabo:2019gvu,Szabo:2019hhg} for recent reviews of these and other aspects of nonassociativity in quantum mechanics, field theory and string theory. 

To handle nonassociative algebras of observables in quantum mechanics, the notion of composition product was introduced in~\cite{Mylonas:2013jha} as an overarching associative algebra structure corresponding to the nonassociative deformation quantization of almost Poisson structures on phase space. In~\cite{Kupriyanov:2018xji} it was shown to arise as the associative deformation quantization of a symplectic embedding of the nonassociative phase space. The composition product in this case extends the nonassociative algebra of functions with a star-product to an associative algebra of differential operators; see~\cite{Kupriyanov:2018xji} for an explicit description of this algebra in the case of a uniform magnetic charge distribution. The GNS construction for associative deformation quantization  considered in~\cite{Bordemann:1997tz} then provides a natural candidate for a (pre-)Hilbert space in nonassociative quantum mechanics. 

In this paper we develop a more precise and rigorous description of the general notion of composition product in terms of enveloping algebras of nonassociative algebras. We use it to formulate properties of states together with notions of trace in a novel and interesting way. We are even able to develop GNS constructions within our formalism, contrary to the claim~\cite{Bojowald:2014oea} that this is not possible in nonassociative quantum mechanics.

Finite-dimensional nonassociative algebras are also abundant in mathematics. The best known example arises in the classification of the normed division algebras over the real numbers. There are only four such algebras, occuring in dimensions one, two, four and eight, represented by the fields $\FR$ and $\FC$ of real and complex numbers, respectively, the noncommutative ring of quaternions $\mathbbm{H}$, and the nonassociative algebra of octonions $\mathbbm{O}$. The latter comes with a weakened form of nonassociativity: $\mathbbm{O}$ is an \emph{alternative} algebra, i.e. it is two-element associative. The associator of an alternative algebra is an alternating trilinear form, and any alternative algebra is also \emph{power associative}, i.e. it is one-element associative. Another well-known class of  algebras are the Jordan algebras, which enjoy the similar property of being \emph{flexible}. 

Both classes of algebras have a long history in attempts to develop a proper theory of nonassociative quantum mechanics. 
Let us stress though that our approach is fundamentally different. While we develop a suitably general formulation of ordinary quantum mechanics that can also handle Jordan and octonion-type observable and state algebras, we do not consider a Jordan-type reformulation of traditional quantum mechanics, nor do we consider quantum mechanics with octonion-valued amplitudes or other such drastic modifications of quantum theory.

In this paper we take an algebraic approach to nonassociative quantum mechanics, which has the advantage of avoiding various analytic issues as well as the need for delving into the technicalities of higher geometry and algebra. It allows us to focus directly on the challenges posed by nonassociativity without analytical distractions that play a more technical role in our context; such issues are of course important and should ultimately be addressed. The main examples we consider will in fact deal with finite-dimensional algebras, where these issues are absent from the outset. Our approach is however general enough that it further avoids the need of additional simplifying properties of nonassociative algebras, such as alternativity, which are usually introduced as weakened forms of nonassociativity in order to simplify algebraic manipulations (as in e.g.~\cite{Bojowald:2014oea,Bojowald:2015cha}), but do not feature in the infinite-dimensional physical examples discussed above.

\newpage

\subsection*{\underline{\sf Outline and summary of results}}

The remainder of this paper is organised into seven sections, structured as follows:

\begin{myitemize}

\item In \S\ref{sec:enveloping} we introduce one of the main algebraic players of this paper: the universal enveloping algebra $\CCU(\CA)$ of a nonassociative algebra $\CA$, and its composition product. One point of the universal enveloping algebra is that it can be considered as a starting point for general birepresentations of $\CA$. In this paper we mostly ignore this level of generality and work with the regular birepresentation, which defines the associative multiplication algebra of $\CA$. In this setting alone, the underlying observer algebra can be quite general, and some of our considerations can be applied as a further generalization of our approach to quantum mechanics. In this paper we specialise to the case of unital $\ast$-algebras $\CA$, and all our examples are of this type. This naturally identifies the module actions of $\CCU(\CA)$ by left and right multiplication, hence simplifying some of the treatment, and also making closer contact to more traditional approaches to quantum mechanics.

\item In \S\ref{sec:GNS} we apply some standard constructions from the algebraic formulation of quantum mechanics to the associative unital $\ast$-algebra $\CCU(\CA)$. We define the notion of states over $\CCU(\CA)$, and the GNS representation of $\CCU(\CA)$ on the corresponding pre-Hilbert space. The latter can be used to distinguish between pure and mixed states, which may also be characterised through an algebraic notion of entropy.

\item In \S\ref{sec:eigen} we introduce the notion of observables of the enveloping algebra $\CCU(\CA)$, and derive uncertainty relations between them. We give an algebraic definition of eigenstate and reproduce several standard properties from basic quantum mechanics.

\item To zoom in on truly nonassociative features implied by the original observer algebra $\CA$, in \S\ref{sec:trace} we assume that $\CA$ is equipped with a suitable trace functional. Like the states, a trace captures a natural notion of positivity essential for a probabilistic interpretation of the theory. We use traces on $\CA$ to define states on $\CCU(\CA)$, which when restricted to $\CA$ provide a density matrix description of nonassociative quantum mechanics. These connect the notion of eigenstates of elements of $\CCU(\CA)$ with a more familiar notion of eigenvectors in $\CA$. The tracial state is also used to provide an alternative GNS representation of $\CCU(\CA)$, which employs a smaller pre-Hilbert space constructed directly from the nonassociative algebra $\CA$.

\item In \S\ref{sec:CPM} we formulate completely positive dynamics for states over the enveloping algebra $\CCU(\CA)$. In the case of unitary time evolution, we derive analogues of the Heisenberg and Schr\"odinger equations, as well as the Lindblad master equation, in nonassociative quantum mechanics. 

\item In \S\ref{sec:Jordan} we apply our general considerations to the explicit example of a finite-dimensional matrix algebra endowed with the Jordan product. We give an explicit description of its enveloping algebra, its tracial GNS representation and its eigenstates. We provide an example of a genuinely nonassociative observable (which would vanish in traditional quantum mechanics), and use it as a Hamiltonian to obtain solutions of the Schr\"odinger and Heisenberg equations in this model.

\item The algebra of octonions is treated in \S\ref{sec:oct} as another explicit example. Again we give an explicit description of its enveloping algebra and its tracial GNS representation. We also display an explicit pair of observables for which the tracial state is a minimum uncertainty state.

\item Finally, we conclude in \S\ref{sec:conclusion} with a discussion of some of the physical implications of our scheme.

\end{myitemize}

\subsection*{\underline{\sf Notation}}

\begin{myitemize}

\item All tensor products are taken over a ground field $\fk$ of characteristic zero. All homomorphisms are $\fk$-linear.

\item Generic elements of an algebra $\CA$ are denoted $a,b,c,\dots$; we write $\widehat a,\widehat b,\widehat c,\dots$ for their images in the corresponding enveloping algebra $\CCU(\CA)$ under the injection $\CA \longhookrightarrow\CCU(\CA)$ of vector spaces. Elements of the opposite algebra $\CA^{\rm op}$ are denoted $a',b',c',\dots$. Generic elements of $\CCU(\CA)$ are denoted $A,B,C,\dots$. Algebraic objects associated to an algebra $\CA$ are denoted by caligraphic symbols $\CH,\CI,\dots$, those associated to the enveloping algebra $\CCU(\CA)$ by script symbols $\CCH,\CCI,\dots$.

\item We use the superscript ${}^*$ to denote the $\ast$-involution on a $\ast$-algebra, while ${}^\dag$ refers to the adjoint of an operator on a pre-Hilbert space. An overline \smash{$\overline{ \phantom{A} }$} denotes complex conjugation on the field $\FC$ of complex numbers. 

\item We denote by $\unit$ the unit element of a unital algebra, and by $\id$ the identity endomorphism of a vector space.

\end{myitemize}

%

\section{Enveloping algebras and composition products}
\label{sec:enveloping}

The basic example of a nonassociative algebra is a Lie algebra $\frg$ over a ground field $\fk$ of characteristic zero, where the violation of associativity $[[a,b],c]\neq [a,[b,c]]$ for $a,b,c\in\frg$ is controlled by the Jacobi identity for the Lie bracket $[\,\cdot\,,\,\cdot\,]$ of $\frg$. To any Lie algebra $\frg$ there is a ``smallest'' associative unital $\fk$-algebra $\CCU(\frg)$ containing $\frg$, called the universal enveloping algebra of $\frg$. It is constructed as the quotient of the free unital algebra 
on the underlying $\fk$-vector space of $\frg$, usually identified with the tensor algebra $
\CCT(\frg) = \fk \, \oplus \ \bigoplus_{n\geqslant1}\,\frg^{\otimes n}
$,
by the two-sided ideal generated by elements of the form $a\otimes b-b\otimes a -[a,b]$ for $a,b\in\frg$. This construction uses the property that any associative algebra has a Lie algebra structure on its underlying vector space given by the commutator bracket. We wish to formalise this notion to more general nonassociative algebras. 

Enveloping algebras are naturally related to modules; for example, the representations of a Lie algebra $\frg$ are in one-to-one correspondence with representations of its universal enveloping algebra $\CCU(\frg)$.  One approach to modules over a general nonassociative algebra $\CA$ is to study representations of its enveloping algebra, formulated in the language of operads~\cite{Ginzburg1994} (see e.g.~\cite{Osborn78,KoSh18,Dhankar23} for other approaches to representations of nonassociative algebras). For this, one works in a category of algebras over some operad $\CCP$. For an algebra $\CA$ and a vector space $V$, an $\CA$-module structure on $V$ is given as a collection of operations defined using all possible operations from $\CCP$, by inserting an element of $V$ into one slot and elements of $\CA$ into all other slots. The module axioms follow from taking the defining identities of $\CCP$ and forming new identities, marking one element in all possible ways, while treating unmarked elements as belonging to $\CA$ and marked elements as belonging to $V$. This construction provides a good notion  of `enveloping algebra' $\CCU(\CA)$ for the nonassociative algebra $\CA$.

When ${\CCP}$ is the commutative operad, this recovers the usual notion of a module over a commutative algebra $\CA$, whereas when ${\CCP}$ is the associative operad it recovers the notion of an $\CA$-bimodule. When $\CCP$ is the Lie operad, it coincides with the usual notion of representation of a Lie algebra; the natural map from the Lie operad to the associative operad yields the forgetful functor from associative algebras to Lie algebras, whose left adjoint is the universal enveloping algebra functor. Whereas a Lie algebra structure, as well as a commutative Jordan algebra structure, can always be extracted from an associative algebra structure, we do not know of any systematic way of doing so for a general nonassociative algebra structure. Instead, we will adopt a different definition of universal enveloping algebra that is tailored to more general cases, and which agrees with the usual notion in the case of Lie algebras.

\subsection*{\underline{\sf The universal enveloping algebra}}

Let $(\CA,\unit)$ be a unital algebra over a field $\fk$ of characteristic zero, with multiplication denoted by juxtaposition. This means that $\CA$ is a $\fk$-vector space and the multiplication $\CA\otimes\CA\longrightarrow\CA$ is a linear map, which is neither necessarily commutative nor associative, i.e. in general $a\,b\neq b\,a$ and $a\, (b\, c) \neq (a\, b)\, c$, for $a,b,c\in \CA$. The unit element $\unit\in\CA$ is the unique element with the property $\unit\,a=a=a\,\unit$ for all $a\in\CA$. We will often simply write $\CA$ for the pair $(\CA,\unit)$ as shorthand. Let $\CA^{\rm op}$ be the opposite algebra of $\CA$; it is anti-isomorphic to $\CA$ under an involutive isomorphism $a\longmapsto a'$ of underlying vector spaces with multiplication $a'\,b' = (b\,a)'$. 

Let 
\begin{align}
\CCU(\CA) = \CCF(\CA\oplus\CA^{\rm op})
\end{align}
be the free associative algebra over $\fk$ on the underlying vector space of $\CA\oplus\CA^{\rm op}$. It inherits the unit $\unit$ of $\CA$ which makes it into an associative unital algebra.  The left modules $V$ over the associative algebra $\CCU(\CA)$ define \emph{birepresentations} of the nonassociative algebra $\CA$, see~\cite{Jacobson1954}. By construction there is a map
\begin{align} \label{eq:injectivemap}
\widehat{ \, \ \cdot \ \, }\,:\CA\oplus\CA^{\rm op}\longrightarrow\CCU(\CA) \ , \quad (a,b')\longmapsto\widehat a + \widehat{b'}
\end{align}
of underlying vector spaces. 

In this paper we shall only work with the regular birepresentation $V=\CA$ defined by left and right multiplication of $\CA$ on itself. 
Thus we regard $\CA$ as a left $\CCU(\CA)$-module under the left action $\triangleright:\CCU(\CA)\times \CA\longrightarrow \CA$ defined by left multiplication
\begin{align}\label{eq:leftmodule}
\widehat a\,\triangleright\, x :=  a\, x \qquad \mbox{and} \qquad \widehat{a'}\,\triangleright\, x :=  x\, a  \ ,
\end{align}
on generators $a\in \CA$, for all $x\in \CA$; note that the left action by $\CA^{\rm op}$ is equivalent to the right action of $\CA$ by right multiplication. Because $\CA$ is unital, the map $a\longmapsto\widehat a$ of $\CA$ into $\CCU(\CA)$ is injective and we may identify $\CA$ with its image in $\CCU(\CA)$ under this map. Similarly, \smash{$a'\longmapsto\widehat{a'}$} is injective and we may identify $\CA^{\rm op}$ with its image in $\CCU(\CA)$. 

The idea behind this definition is that, for any algebra $\CA$, there is always the natural associative algebra $\sEnd(\CA)$ of endomorphisms of the underlying vector space of $\CA$ with the composition of $\fk$-linear maps. The left module actions define $\fk$-linear maps $\CA\longrightarrow\sEnd(\CA)$ through $a\longmapsto \widehat a\, \triangleright\, (\,\cdot\,)$ and \smash{$a\longmapsto \widehat{a'}\, \triangleright\, (\,\cdot\,)$} for $a\in\CA$. Viewed in this way as a subalgebra of $\sEnd(\CA)$, the associative algebra $\CCU(\CA)$ defines the \emph{multiplication algebra} of $\CA$, see~\cite{Schafer55,Schafer95}.  If $\CA$ has finite dimension $n$, then the multiplication algebra is isomorphic to a subalgebra of the algebra $ \mathbbm{M}_n(\fk)$ of $n{\times}n$ matrices with entries in $\fk$. In the following we shall sometimes abuse notation (and terminology) by identifying $\CCU(\CA)$ with the multiplication algebra. 

\begin{definition} \label{def:compalg}
The associative unital $\fk$-algebra $(\CCU(\CA),\circ,\unit)$ is the \emph{enveloping algebra} of $(\CA,\unit)$, whose multiplication $\circ:\CCU(\CA)\otimes\CCU(\CA)\longrightarrow\CCU(\CA)$ is the \emph{composition product}.
\end{definition}

By construction, the associative algebra $\CCU(\CA)$ consists of linear combinations of elements of the form \smash{$(\widehat a_1+\widehat b'_1)\circ (\widehat a_2+\widehat b'_2)\circ \cdots \circ (\widehat a_n+\widehat b'_n)$} with $a_i,b_i\in \CA$, which by definition act on $x\in\CA$ as
\begin{align}
(\widehat a_1\circ \widehat a_2\circ \cdots \circ \widehat a_n)\,\triangleright\, x = \widehat a_1\,\triangleright\,\big(\widehat a_2\,\triangleright\,\big(\cdots\,\triangleright(\widehat a_n\,\triangleright\,x)\cdots\big)\big) = a_1\,\big( a_2\,\big(\cdots (a_n\,x)\cdots\big)\big) \ ,
\end{align}
and 
\begin{align}
(\,\widehat b'_1\circ \widehat b'_2\circ \cdots \circ \widehat b'_n)\,\triangleright\, x = \widehat b'_1\,\triangleright\,\big(\,\widehat b'_2\,\triangleright\,\big(\cdots\,\triangleright(\,\widehat b'_n\,\triangleright\,x)\cdots\big)\big) = 
\big(\big(\cdots(x\,b_n)\cdots\big)\,b_2\big)\, b_1 \ .
\end{align}

The involutive anti-isomorphism $a\longmapsto a'$ between $\CA$ and $\CA^{\rm op}$ extends as an anti-automorphism $A\longmapsto A'$ of $\CCU(\CA)$ defined by
\begin{align}
(\widehat a_1\circ \widehat a_2\circ \cdots \circ \widehat a_n)' := \widehat a_n'\circ\cdots\circ\widehat a_2'\circ\widehat a_1' \ .
\end{align}
This involution interchanges the left module actions by left and right multiplication, so that
\begin{align}
(\widehat a_1\circ \widehat a_2\circ \cdots \circ \widehat a_n)'\,\triangleright\, x = \big(\big(\cdots(x\,a_1)\,a_2\big)\cdots\big)\, a_n \ ,
\end{align}
together with the obvious extensions to mixed composition products among elements from the embeddings of $\CA$ and $\CA^{\rm op}$; for example
\begin{align}
(\widehat a_i\circ\widehat b'_j)'\,\triangleright\, x = (\,\widehat b_j\circ\widehat a_i')\,\triangleright\, x = b_j\,(x\,a_i) \ ,
\end{align}
and so on by iteration of these actions.

The injective linear map \eqref{eq:injectivemap} extends to an epimorphism of associative unital $\fk$-algebras
\begin{align} \label{eq:epimor}
{\tt P}:\CCT_0(\CA\oplus\CA^{\rm op}) \relbar\joinrel\twoheadrightarrow \CCU(\CA)
\end{align}
from the reduced tensor algebra $\CCT_0(\CA\oplus\CA^{\rm op}) = \bigoplus_{n\geqslant1}\,(\CA\oplus\CA^{\rm op})^{\otimes n}$ of the underlying vector space of $\CA\oplus\CA^{\rm op}$, which comes with a canonical inclusion $\CA\oplus\CA^{\rm op}\lhook\joinrel\longrightarrow\CCT_0(\CA\oplus\CA^{\rm op})$. This identifies $\CCU(\CA)\simeq\CCT_0(\CA\oplus\CA^{\rm op})/\ker({\tt P})$, where in general the kernel of $\tt P$ depends on the relations in $\CA$. 

We can show that $\CCU(\CA)$ is the most general such algebra containing $\CA\oplus\CA^{\rm op}$ through

\begin{proposition} \label{prop:universal}
The enveloping algebra $\CCU(\CA)$ is \emph{universal}: any $\fk$-linear injective map $\iota:\CA\oplus\CA^{\rm op}\lhook\joinrel\longrightarrow\CCA$ of $\CA\oplus\CA^{\rm op}$ into another associative unital $\fk$-algebra $\CCA$ factors through a unique algebra homomorphism ${\tt T}:\CCU(\CA)\longrightarrow\CCA$ according to the commutative diagram
\begin{align}
\begin{split}
\xymatrix{
\CCU(\CA) \ar[rr]^{\tt T} & & \CCA \\
& \CA\oplus\CA^{{\rm op}^{\phantom{\dag}}}  \ar@{^{(}->}[ul]^{\widehat{  \ \cdot \ }} \ar@{_{(}->}[ur]_\iota &
}
\end{split}
\end{align}
\end{proposition}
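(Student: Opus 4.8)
The plan is to exploit the fact, established just before the statement, that $\CCU(\CA)$ is a quotient of the reduced tensor algebra $\CCT_0(\CA\oplus\CA^{\rm op})$ via the epimorphism ${\tt P}$ of \eqref{eq:epimor}, together with the defining universal property of the tensor algebra. First I would recall that $\CCT_0(V)$, for $V = \CA\oplus\CA^{\rm op}$, is the free (non-unital, then freely adjoined unit since $\CA$ is unital) associative algebra on $V$: any $\fk$-linear map $V\longrightarrow\CCA$ into an associative unital $\fk$-algebra extends uniquely to a unital algebra homomorphism $\CCT_0(V)\longrightarrow\CCA$ (sending the canonically adjoined unit to $\unit_\CCA$). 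Applying this to the given injection $\iota:V\lhook\joinrel\longrightarrow\CCA$ produces a unique algebra homomorphism $\widetilde{\tt T}:\CCT_0(V)\longrightarrow\CCA$ with $\widetilde{\tt T}\circ(\text{inclusion}) = \iota$.

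The key step is then to show that $\widetilde{\tt T}$ descends along ${\tt P}$, i.e.\ that $\ker({\tt P})\subseteq\ker(\widetilde{\tt T})$. This is where the real content lies. The point is that $\ker({\tt P})$ is, by construction, the two-sided ideal of $\CCT_0(V)$ generated by all elements that get identified when one passes from the free associative algebra on the \emph{abstract vector space} $V$ to the multiplication algebra acting on $\CA$ — concretely, relations forced by $\fk$-linearity of the multiplication map $\CA\otimes\CA\longrightarrow\CA$ (so that $\widehat{\lambda a + \mu b} = \lambda\,\widehat a + \mu\,\widehat b$ and similarly on $\CA^{\rm op}$, and $\widehat\unit = \unit$). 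Since $\CCU(\CA) = \CCF(\CA\oplus\CA^{\rm op})$ is the free associative algebra \emph{on the underlying vector space} $V$, in fact $\ker({\tt P})$ is precisely the ideal collapsing the tensor algebra on the \emph{set} underlying $V$ down to the tensor algebra on the \emph{vector space} $V$; any such relation is automatically respected by $\widetilde{\tt T}$ because $\iota$ is already $\fk$-linear on $V$. Hence $\widetilde{\tt T}$ factors uniquely through the quotient ${\tt P}:\CCT_0(V)\twoheadrightarrow\CCU(\CA)$, yielding a unique unital algebra homomorphism ${\tt T}:\CCU(\CA)\longrightarrow\CCA$ with ${\tt T}\circ{\tt P} = \widetilde{\tt T}$, and therefore ${\tt T}\circ\widehat{\ \cdot\ } = \iota$ since $\widehat{\ \cdot\ } = {\tt P}\circ(\text{inclusion})$.

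Finally I would verify uniqueness of ${\tt T}$ itself: the image of $\widehat{\ \cdot\ }$ generates $\CCU(\CA)$ as an associative unital algebra (every element is a $\fk$-linear combination of composition products of elements $\widehat a + \widehat{b'}$), so any two unital algebra homomorphisms out of $\CCU(\CA)$ agreeing on $\widehat{\CA\oplus\CA^{\rm op}}$ must coincide. The main obstacle I anticipate is the bookkeeping in the previous paragraph — pinning down exactly what $\ker({\tt P})$ is and confirming that $\iota$'s $\fk$-linearity is \emph{all} that is needed for $\widetilde{\tt T}$ to kill it; one must be careful that no ``hidden'' relations among the $\widehat a$ coming from the (nonassociative, noncommutative) multiplication of $\CA$ are imposed — and indeed none are, because $\CCU(\CA)$ is the \emph{free} associative algebra on $\CA\oplus\CA^{\rm op}$ and not a quotient encoding the product of $\CA$. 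This is precisely the design feature flagged in the text: the relations of $\CA$ enter only through how $\CCU(\CA)$ \emph{acts} on $\CA$ in the regular birepresentation, not through the abstract algebra structure of $\CCU(\CA)$, so the universal property is the clean one stated.
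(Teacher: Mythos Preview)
Your approach is essentially the same as the paper's, just dressed up in more machinery. The paper's proof is two sentences: set ${\tt T}(\widehat a+\widehat{b'}) = \iota(a,b')$ on generators, then extend uniquely to arbitrary composition products because these elements generate $\CCU(\CA)$ as an associative algebra. That is precisely the universal property of the free associative algebra on the vector space $V=\CA\oplus\CA^{\rm op}$, which you are invoking via the tensor algebra.

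Your middle paragraph, however, chases a ghost. Since $\CCU(\CA)$ is \emph{defined} as the free associative algebra $\CCF(V)$ on the vector space $V$, and $\CCT_0(V)$ is exactly that object, the epimorphism ${\tt P}$ is essentially the identity (modulo how the unit is handled): there is no ``ideal collapsing the tensor algebra on the set underlying $V$ down to the tensor algebra on the vector space $V$'' because $\CCT_0(V)$ is already built on the vector space. The paper's remark that $\ker({\tt P})$ ``depends on the relations in $\CA$'' refers to the abuse of notation flagged just before, where $\CCU(\CA)$ is sometimes identified with the multiplication algebra inside $\sEnd(\CA)$; but Proposition~\ref{prop:universal} is stated and proved for the free algebra itself. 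Your correct and crucial observation --- that no hidden relations from the nonassociative product of $\CA$ are imposed on $\CCU(\CA)$ --- is therefore not a step to be verified but the very content of the definition $\CCU(\CA)=\CCF(V)$. Strip out the kernel bookkeeping and what remains is exactly the paper's argument.
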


\begin{proof}
Since $\CCU(\CA)$ is generated by left and right multiplication on the algebra $\CA$ through the left module actions \eqref{eq:leftmodule}, it suffices to define $\tt T$ on the image of the injection \eqref{eq:injectivemap} by ${\tt T}(\widehat a+\widehat{b'}) = \iota(a,b')$ for all $a,b\in\CA$.  Since $\widehat a+\widehat{b'}$ generate $\CCU(\CA)$, we can then extend $\tt T$ uniquely to all of $\CCU(\CA)$ as an algebra homomorphism on arbitrary composition products.
\end{proof}

\subsection*{\underline{\sf Examples}}

Let us look at how this construction works in the three standard examples discussed at the beginning of this section.

\begin{example}[\textbf{Lie algebras}] \label{ex:Liecom}
Let $\frg$ be a Lie algebra over $\fk$ and denote its Lie bracket by $[\,\cdot\,,\,\cdot\,]$. The unitization
\begin{align}
\CA = \fk\oplus\frg
\end{align}
over $\fk$ is a unital $\fk$-algebra with multiplication defined by
\begin{align}
(k,a)\,(l,b) = (k\,l\,,\,l\,a + k\,b + [a,b])
\end{align}
and unit element $\unit=(1,0)$. There is a natural algebra monomorphism $i:\frg\lhook\joinrel\longrightarrow\CA$ given by $a\longmapsto(0,a)$, and in general we write $(k,a)$ as $k+a$.

By skew-symmetry of the Lie bracket, it follows that the left module structure of $\CCU(\CA)$ restricted to $\frg$ satisfies
\begin{align}
\widehat a\,\triangleright\,x = [a,x] = -[x,a] = -\widehat{a'}\,\triangleright\,x \ .
\end{align}
Hence the embeddings of $\CA$ and $\CA^{\rm op}$ differ only through a sign and can essentially be identified with each other:
\begin{align} \label{eq:equalsign}
\widehat a = -\widehat{a'} \ .
\end{align}

On the other hand, the Jacobi identity for the Lie bracket implies
\begin{align}
\big(\widehat a \circ\widehat b-\widehat b\circ\widehat a\big)\,\triangleright\,x = [a,[b,x]] - [b,[a,x]] = [[a,b],x] = \widehat{[a,b]}\,\triangleright\,x \ ,
\end{align}
for $a,b,x\in\CA$.
This shows that the enveloping algebra $\CCU(\CA)$ represents the Lie algebra $\frg$ through
\begin{align} \label{eq:Liealgrep}
\widehat{[a,b]} = \widehat a \circ\widehat b-\widehat b\circ\widehat a \ .
\end{align}
Through the epimorphism \eqref{eq:epimor}, the quotient of the reduced tensor algebra $\CCT_0(\CA\oplus\CA^{\rm op})$ by the equations \eqref{eq:equalsign} and \eqref{eq:Liealgrep} defines the usual Lie universal enveloping algebra. 

As a subalgebra of $\sEnd(\CA)$, the enveloping algebra $\CCU(\CA)$ defines the unitization of the algebra ${\sf ad}(\frg)$ of inner derivations of the Lie algebra $\frg$, i.e.~the adjoint representation of $\frg$. 
\end{example}

\begin{example}[\textbf{Associative algebras}] \label{lem:bimodcomp}
Let $\CA$ be an associative unital $\fk$-algebra with enveloping algebra $\CCU(\CA)$. Then
\begin{align}  \label{eq:nobimod}
\widehat{b'}\,\triangleright\,(\widehat a\,\triangleright\, x) = (a\,x)\,b =a\,(x\,b) = \widehat a\,\triangleright\,(\widehat{b'}\,\triangleright\, x) \ ,
\end{align}
hence the embeddings of $\CA$ and $\CA^{\rm op}$ commute with each other:
\begin{align} \label{eq:asscommute}
\widehat{b'}\circ\widehat a = \widehat a\circ\widehat{b'} \ .
\end{align}
On the other hand, if $\CA$ were nonassociative then the second equality of \eqref{eq:nobimod} would fail to hold in general. It follows that associativity of $\CA$ is equivalent to the left actions by $\CCU(\CA)$ combining to a bimodule structure on $\CA$ inherited from the natural $\CA$-bimodule structure induced by multiplication of $\CA$ on itself.

The simple computation
\begin{align} \label{eq:comprep}
(\widehat a\circ \widehat b\,)\,\triangleright\,x = a\,(b\,x) = (a\,b)\,x = \widehat{a\,b}\,\triangleright\, x 
\end{align}
shows that
\begin{align} \label{eq:algrep}
\widehat a\circ \widehat b = \widehat{a\,b} \ ,
\end{align}
for all $a,b,x\in\CA$. On the other hand, if $\CA$ were nonassociative then the second equality of \eqref{eq:comprep} would fail to hold in general. It follows that associativity of $\CA$ is equivalently characterised by its enveloping algebra $\CCU(\CA)$ defining a faithful algebra representation of $\CA$ in $\sEnd(\CA)$. Similarly
\begin{align} \label{eq:algoprep}
\widehat{(a\,b)'} = \widehat{b'}\circ\widehat{a'} \ .
\end{align}

The quotient of $\CCT_0(\CA\oplus\CA^{\rm op})$ by the relations \eqref{eq:asscommute}, \eqref{eq:algrep} and \eqref{eq:algoprep} identifies the enveloping algebra of $\CA$ with the extended algebra
\begin{align}
\CCU(\CA)\simeq\CA\otimes\CA^{\rm op} \ ,
\end{align}
as expected from the universal property (Proposition~\ref{prop:universal}). In this representation $\widehat a = a\otimes\unit$ and \smash{$\widehat{a'} = \unit\otimes a$}.
\end{example}

\begin{example}[\textbf{Commutative algebras}] \label{lem:bimodcomp2}
Let $\CA$ be a commutative (but not necessarily associative) unital $\fk$-algebra with enveloping algebra $\CCU(\CA)$. Then 
\begin{align}
\widehat a\,\triangleright\,x = a\,x = x\,a = \widehat{a'}\,\triangleright\, x \ .
\end{align}
On the other hand, if $\CA$ were noncommutative then the second equality would fail to hold in general. It follows that commutativity of $\CA$ is equivalent to equality of the embeddings of $\CA$ and $\CA^{\rm op}=\CA$:
\begin{align}
\widehat a = \widehat{a'} \ ,
\end{align}
and hence $\CCU(\CA)\simeq\CCT_0(\CA)$. Note that in general $\CCU(\CA)$ is noncommutative, unless $\CA$ is associative (see Example~\ref{lem:bimodcomp}) in which case $\CCU(\CA)\simeq\CA$.
\end{example}

\subsection*{\underline{\sf Enveloping $\ast$-algebras}}

Similarly to Examples~\ref{ex:Liecom},~\ref{lem:bimodcomp} and~\ref{lem:bimodcomp2}, in the general case we would also like to have a means of naturally identifying the embeddings of $\CA$ and $\CA^{\rm op}$ in $\CCU(\CA)$, as well as their left actions on $\CA$. We henceforth set $\fk=\FC$ to be the ground field of complex numbers, and always assume that our algebras are equipped with $\ast$-structures. 
The importance of this choice in the traditional approach to quantum mechanics stems from the realisation that real-valued wavefunctions  only describe states with zero momentum, or zero magnetic quantum number in the case of spin. The necessity of complex-valued amplitudes can be traced back to the fact that the Schr\"odinger equation is a first-order equation. The real and imaginary part (or modulus and phase) of a wavefunction play a role analogous to the need for prescibing position as well as momentum to describe a classical state of a particle. Attempts to use other ground fields (such as the real numbers $\FR$), or quaternionic and even octonionic division algebras, for quantum mechanics can be found in the literature and will not be pursued here.

Let $(\CA,{}^\ast,\unit)$ be a unital $\ast$-algebra over $\FC$, where the $\ast$-involution ${}^\ast:\CA\longrightarrow\CA$ is an antilinear antihomomorphism,
\begin{align}
(a^*)^*=a \ \text{,} \quad (z\,\unit)^*=\overline{z}\,\unit \qquad \text{and} \qquad (a \, b)^\ast = b^\ast\, a^\ast \ ,
\end{align}
for all $a,b\in\CA$ and $z\in\FC$.
Again we will often simply write $\CA$ for the triple $(\CA,{}^\ast,\unit)$ when the operations are clear from the context. Via the same constructions as before, the $\ast$-involution on $\CA$ induces a $\ast$-involution on its enveloping algebra $\CCU(\CA)$, which for simplicity we denote with the same symbol. We call the associative unital $\ast$-algebra $(\CCU(\CA),\circ,{}^*,\unit)$ the \emph{universal enveloping $\ast$-algebra} of $(\CA,{}^\ast,\unit)$; for brevity, we shall often simply call it the `enveloping algebra'.

The role of the $\ast$-involutions is to generally identify the left actions of $\CCU(\CA)$ on $\CA$ by opposite algebra structures through

\begin{lemma} \label{lem:leftrightmod}
The left module structures on $\CA$ are related as
\begin{align}
(A'\,\triangleright\, x)^* = A^*\,\triangleright\, x^* \ ,
\end{align}
for all $A\in\CCU(\CA)$ and $x\in \CA$.
\end{lemma}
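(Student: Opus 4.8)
The plan is to recast the claimed identity as an identity of endomorphisms of $\CA$ and then verify it on a generating set. Write $\rho\colon\CCU(\CA)\longrightarrow\sEnd(\CA)$ for the representation afforded by the regular birepresentation, $\rho(A)=A\,\triangleright\,(\,\cdot\,)$; by construction this is a unital homomorphism of associative algebras (in the multiplication-algebra picture it is simply the inclusion $\CCU(\CA)\hookrightarrow\sEnd(\CA)$). Let $J\colon\CA\longrightarrow\CA$ be the antilinear involution $J(x)=x^*$, so that $J^2=\id$. The asserted identity $(A'\,\triangleright\,x)^*=A^*\,\triangleright\,x^*$ is then precisely the operator identity $\rho(A^*)=J\circ\rho(A')\circ J$, so it suffices to establish the latter for every $A\in\CCU(\CA)$.

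The point is that the two maps $\CCU(\CA)\longrightarrow\sEnd(\CA)$ given by $A\longmapsto\rho(A^*)$ and by $A\longmapsto J\circ\rho(A')\circ J$ are both antilinear anti-homomorphisms of associative unital algebras. For the first, the $\ast$-involution is an antilinear anti-automorphism of $\CCU(\CA)$ and $\rho$ is a homomorphism. For the second, $A\longmapsto A'$ is a (linear) anti-automorphism of $\CCU(\CA)$, $\rho$ is a homomorphism, and conjugation $T\longmapsto J\circ T\circ J$ is an antilinear unital algebra endomorphism of $\sEnd(\CA)$ precisely because $J^2=\id$; composing these yields an antilinear anti-homomorphism. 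Since $\CCU(\CA)$ is generated as an algebra by the images $\widehat a$ and $\widehat{a'}$ of $\CA$ and $\CA^{\rm op}$, and two antilinear anti-homomorphisms agreeing on a generating set agree everywhere, it remains only to check these two maps on the generators.

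On generators the induced $\ast$-involution acts by $(\widehat a)^*=\widehat{a^*}$ and $(\widehat{a'})^*=\widehat{(a^*)'}$, while the anti-automorphism acts by $(\widehat a)'=\widehat{a'}$ and $(\widehat{a'})'=\widehat a$. Hence $\rho\big((\widehat a)^*\big)$ is left multiplication by $a^*$, and $J\circ\rho\big((\widehat a)'\big)\circ J=J\circ\rho(\widehat{a'})\circ J$ sends $x\longmapsto(x^*\,a)^*=a^*\,x$, so it too is left multiplication by $a^*$; similarly $\rho\big((\widehat{a'})^*\big)$ is right multiplication by $a^*$, while $J\circ\rho\big((\widehat{a'})'\big)\circ J=J\circ\rho(\widehat a)\circ J$ sends $x\longmapsto(a\,x^*)^*=x\,a^*$, again right multiplication by $a^*$. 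This establishes the operator identity on generators and hence proves the lemma. The only point requiring care is the bookkeeping of the three antilinear maps in play ($\ast$ on $\CA$, $J$, and complex conjugation) together with pinning down the convention $(\widehat a)^*=\widehat{a^*}$, $(\widehat{a'})^*=\widehat{(a^*)'}$ for the induced involution; once this is fixed the verification collapses to the two one-line computations above. An equivalent, more pedestrian route avoids $\rho$ and $J$ and checks the identity directly on (mixed) monomials $\widehat a_1\circ\cdots\circ\widehat a_n$ via the formula $(\widehat a_1\circ\cdots\circ\widehat a_n)'\,\triangleright\,x=\big(\big(\cdots(x\,a_1)\,a_2\big)\cdots\big)\,a_n$, peeling off the $\ast$-involution one factor at a time — an induction on word length amounting to the same computation.
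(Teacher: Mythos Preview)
Your proof is correct and follows essentially the same strategy as the paper: verify the identity on generators and propagate to all of $\CCU(\CA)$ using that both sides respect the algebra structure in the same way. The paper carries this out more informally, computing directly on length-two monomials such as $\widehat a\circ\widehat b$ and $\widehat{a'}\circ\widehat b$ and then invoking ``iteration''; your version packages the same argument more cleanly by rewriting the claim as the operator identity $\rho(A^*)=J\circ\rho(A')\circ J$ and observing that both sides are antilinear anti-homomorphisms, so agreement on the generating set $\{\widehat a,\widehat{a'}\}$ suffices. The pedestrian route you describe at the end is precisely the paper's argument.
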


\begin{proof}
This follows easily by iterating the basic relations
\begin{align}
\big((\widehat a\circ \widehat b\,)\,\triangleright\, x\big)^\ast = \big(a\,(b\, x)\big)^\ast = (x^\ast\,b^\ast)\,a^\ast = \big(\,\widehat{a'}{}^\ast\circ \widehat{b'}{}^\ast\big)\,\triangleright\,x^\ast = \big(\widehat{b'}\circ \widehat{a'}\,\big)^\ast\,\triangleright\,x^\ast \ ,
\end{align}
and similarly
\begin{align}
\big((\widehat{a'}\circ\widehat{b}\,)\,\triangleright\,x\big)^* = \big((b\,x)\,a\big)^* = a^*\,(x^*\,b^*) = \big(\widehat a^* \circ \widehat{b'}{}^*\big)\,\triangleright\,x^* = \big(\widehat{b'}\circ \widehat{a}\,\big)^\ast\,\triangleright\,x^\ast \ ,
\end{align}
for $a,b,x\in\CA$. 
\end{proof}

Lemma~\ref{lem:leftrightmod} can be used to identify the embeddings of $\CA$ and $\CA^{\rm op}$ into $\CCU(\CA)$, so that in the following it will suffice for the most part to work with the injection \smash{$\widehat{\CA}\subset\CCU(\CA)$} alone.


\section{States and GNS constructions}
\label{sec:GNS}

Let $(\CA,{}^*,\unit)$ be a unital $\ast$-algebra over $\FC$. Given the associative unital $\ast$-algebra $(\CCU(\CA),\circ,{}^\ast,\unit)$, we can now carry out some standard constructions on it. We follow the approach of~\cite{Khavkine:2014mta} which develops the Gel'fand-Naimark-Segal (GNS) construction for $\ast$-algebras, generalized from the usual setting of $C^*$-algebras. While this uses very little of the underlying nonassociative structure of our original algebra $\CA$, it is always available, and it can handle the nonassociative operators and observables discussed in \S\ref{sec:Intro}. We shall discuss how to treat the \emph{bona fide} nonassociative quantum mechanics later on in \S\ref{sec:trace}.

Regarding $\CCU(\CA)$ as an observer algebra, a state can be characterised by its measurement outcomes.

\begin{definition} \label{def:state}
A \emph{state} $\omega$ over $\CCU(\CA)$ is a $\FC$-linear functional $\omega:\CCU(\CA)\longrightarrow\FC$ which is positive, i.e. $\omega(A^\ast\circ A)\geqslant0$ for all $A\in\CCU(\CA)$, and normalized, i.e. $\omega(\unit)=1$. The space of all states over $\CCU(\CA)$ is denoted $\mathbb{E}(\CCU(\CA))$.
\end{definition}

\begin{remark} \label{rem:pos1}
This definition of state gives us our first notion of positivity in nonassociative quantum mechanics. 
\end{remark}

\begin{lemma}[\textbf{Cauchy-Schwartz inequality}]\label{lem:CSinequality}
The map $(A,B)\longmapsto\omega(A^\ast\circ B)$ is a semi-definite sesquilinear form for any state $\omega$, which satisfies
\begin{align}\label{eqlem:conj}
\omega(A^\ast\circ B) = \overline{\omega(B^\ast\circ A)}
\end{align}
as well as
\begin{align}\label{eqlem:CSineq}
\big|\omega\big(A^\ast\circ B\big)\big|^2 \leqslant \omega\big(A^\ast\circ A\big)\,\omega\big(B^\ast\circ B\big) \ ,
\end{align}
for all $A,B\in\CCU(\CA)$. 
\end{lemma}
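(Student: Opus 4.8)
The plan is to prove the Cauchy--Schwarz inequality by the standard positivity argument, after first establishing the Hermitian symmetry property \eqref{eqlem:conj}. For the symmetry property, I would apply positivity of $\omega$ to the element $(A+B)^\ast\circ(A+B)$ and to $(A+\ii\,B)^\ast\circ(A+\ii\,B)$; expanding using bilinearity of $\circ$ and antilinearity of ${}^\ast$, and using that $\omega$ is real on elements of the form $C^\ast\circ C$ (a consequence of $\omega(C^\ast\circ C)\geqslant0$), the cross terms $\omega(A^\ast\circ B)$ and $\omega(B^\ast\circ A)$ are forced to have equal real parts and opposite imaginary parts, which is exactly \eqref{eqlem:conj}. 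This also shows that $(A,B)\longmapsto\omega(A^\ast\circ B)$ is a sesquilinear form (linear in $B$, antilinear in $A$), and it is positive semi-definite directly by Definition~\ref{def:state}.

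For the inequality itself, the key step is the usual trick: fix $A,B\in\CCU(\CA)$ and consider, for $\lambda\in\FC$, the nonnegative quantity
\begin{align}
0 \leqslant \omega\big((A-\lambda\,B)^\ast\circ(A-\lambda\,B)\big) = \omega\big(A^\ast\circ A\big) - \lambda\,\omega\big(A^\ast\circ B\big) - \overline{\lambda}\,\omega\big(B^\ast\circ A\big) + |\lambda|^2\,\omega\big(B^\ast\circ B\big) \ .
\end{align}
Using \eqref{eqlem:conj} to write $\omega(B^\ast\circ A) = \overline{\omega(A^\ast\circ B)}$, this becomes $\omega(A^\ast\circ A) - 2\,\Re\big(\lambda\,\omega(A^\ast\circ B)\big) + |\lambda|^2\,\omega(B^\ast\circ B) \geqslant 0$. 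If $\omega(B^\ast\circ B)>0$, I would substitute the optimal choice $\lambda = \overline{\omega(A^\ast\circ B)}/\omega(B^\ast\circ B)$, which immediately yields \eqref{eqlem:CSineq}. If $\omega(B^\ast\circ B)=0$, then the quantity $\omega(A^\ast\circ A) - 2\,\Re\big(\lambda\,\omega(A^\ast\circ B)\big)\geqslant0$ for all $\lambda\in\FC$ forces $\omega(A^\ast\circ B)=0$ (otherwise one can send the real part to $-\infty$ by scaling $\lambda$), so \eqref{eqlem:CSineq} holds trivially as $0\leqslant0$.

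There is essentially no serious obstacle here: the argument is the textbook GNS-type computation, and the only point requiring care is that $\CCU(\CA)$ is an \emph{associative} unital $\ast$-algebra by Definition~\ref{def:compalg} and the discussion of enveloping $\ast$-algebras, so all the manipulations with $\circ$, ${}^\ast$ and $\unit$ (in particular $(A^\ast\circ B)^\ast = B^\ast\circ A$, $\unit^\ast = \unit$, and the expansion of products of sums) are legitimate despite the nonassociativity of the underlying algebra $\CA$. The nonassociativity of $\CA$ never enters: the entire statement lives at the level of $\CCU(\CA)$, where associativity has been restored by construction. The mild subtlety is simply to organize the two cases $\omega(B^\ast\circ B)>0$ and $\omega(B^\ast\circ B)=0$ so that no division by zero occurs, exactly as in the classical proof.
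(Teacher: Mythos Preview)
Your proof is correct. The paper's approach is a close variant: instead of your one-parameter expansion $(A-\lambda B)^\ast\circ(A-\lambda B)$ followed by optimizing over $\lambda$, the paper considers the full two-parameter expression $\omega\big((z_1 A+z_2 B)^\ast\circ(z_1 A+z_2 B)\big)=\underline{z}^\dag\,\underline{\tt Q}\,\underline{z}$ as a positive semi-definite quadratic form on $\FC^2$, and then reads off both \eqref{eqlem:conj} from the Hermiticity $\underline{\tt Q}^\dag=\underline{\tt Q}$ and \eqref{eqlem:CSineq} from $\det\underline{\tt Q}\geqslant0$. The matrix argument is marginally more economical in that the conjugate symmetry and the inequality drop out simultaneously, and no separate case split on $\omega(B^\ast\circ B)=0$ is needed; your argument, on the other hand, is the more commonly taught version and makes the mechanism (choose the best $\lambda$) more transparent. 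Both are standard and entirely equivalent here.
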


\begin{proof}
The first statement follows from $\FC$-linearity of the state $\omega$. For $A,B\in\CCU(\CA)$ and a complex vector $\underline{z} = (z_1,z_2) \in\FC^2$, positivity of $\omega$ implies that
\begin{align}
0\leqslant\omega\big((z_1\,A+z_2\,B)^\ast\circ(z_1\,A+z_2\,B)\big) = \underline{z}^\dag \ \underline{\tt Q} \ \underline{z}
\end{align}
defines a positive semi-definite quadratic form on $\FC^2$ with matrix
\begin{align}
\underline{\tt Q} = \begin{pmatrix}
\omega(A^\ast\circ A) & \omega(A^\ast\circ B) \\ \omega(B^\ast\circ A) & \omega(B^\ast\circ B)
\end{pmatrix} \ .
\end{align}
Hence $\underline{\tt Q}$ must be a positive semi-definite matrix. From symmetry $\underline{\tt Q}^\dag = \underline{\tt Q}$ we find \eqref{eqlem:conj}, and from the determinant
\begin{align}
0\leqslant\det\underline{\tt Q} = \omega(A^\ast\circ A) \, \omega(B^\ast\circ B) - \omega(A^\ast\circ B) \, \omega(B^\ast\circ A)
\end{align}
we find \eqref{eqlem:CSineq}.
\end{proof}

It follows from Lemma~\ref{lem:CSinequality} that the subspace
\begin{align}
\CCI_\omega := \big\{ A\in\CCU(\CA) \ \big| \ \omega\big(A^\ast\circ A\big)=0\big\}
\end{align}
is a left ideal of $\CCU(\CA)$, called the \emph{Gel'fand ideal} of $\omega$. We can then form the quotient space
\begin{align}
\CCH_\omega := \CCU(\CA) \, \big\slash \, \CCI_\omega \ ,
\end{align}
and denote the equivalence classes in $\CCH_\omega$ by
\begin{align}
\psi_A := \big\{\widetilde A\in\CCU(\CA) \  \big| \ \widetilde A - A\in\CCI_\omega \big\} \ .
\end{align}
On $\CCH_\omega$ we define an inner product 
\begin{align}
(\psi_A , \psi_B)_\omega := \omega\big(A^\ast\circ B\big) \ ,
\end{align}
which is well-defined because $\CCI_\omega$ is a left ideal, and makes $\CCH_\omega$ into a pre-Hilbert space over $\FC$. 

We now define a linear {representation} $\pi_\omega:\CCU(\CA)\longrightarrow\sEnd(\CCH_\omega)$ by
\begin{align}
\pi_\omega(A)\,\psi_B := \psi_{A\circ B} \ .
\end{align}
This is well-defined, since $\CCI_\omega$ is a left ideal and hence $\CCH_\omega$ is a left $\CCU(\CA)$-module, and gives a $\ast$-representation: $\pi_\omega(A\circ B) = \pi_\omega(A) \, \pi_\omega(B)$ and $\pi_\omega(A^\ast) =  \pi_\omega(A)^\dag$, by associativity of $\CCU(\CA)$. Furthermore, the representation $\pi_\omega$ is {cyclic} with vacuum vector $\psi_\unit$: Every vector $\psi_A\in\CCH_\omega$ can be written as $\psi_A=\psi_{A\circ\unit}=\pi_\omega(A)\,\psi_\unit$, and
\begin{align}
\langle A\rangle_\omega := \omega(A) = (\psi_\unit , \pi_\omega(A)\,\psi_\unit)_\omega \ .
\end{align}

\begin{definition}
The cyclic $\ast$-representation $(\pi_\omega,\psi_\unit)$ is the \emph{GNS representation} of the enveloping algebra $(\CCU(\CA),\circ,{}^\ast,\unit)$ on the pre-Hilbert space $\CCH_\omega$. 
\end{definition}

\begin{remark}
The GNS representation is unique up to unitary equivalence. In the case that $\CCU(\CA)$ is a unital $C^*$-algebra, $\CCH_\omega$ is a Hilbert space over $\FC$, i.e. every Cauchy sequence in $\CCH_\omega$ converges in $\CCH_\omega$ with respect to the norm induced by the inner product $(\,\cdot\,,\,\cdot\,)_\omega$, while the representation $\pi_\omega$ takes values in the bounded operators on $\CCH_\omega$ and is norm-decreasing (hence continuous) with respect to the operator norm.
\end{remark}

\begin{remark} \label{rem:unitvector}
Let $\omega\in\mathbb{E}(\CCU(\CA))$ be a reference state, and $\psi\in\CCH_\omega$ a fixed unit vector, $(\psi,\psi)_\omega=1$. Then the assignment
\begin{align}
A\longmapsto\omega_\psi(A) := \big(\psi,\pi_\omega(A)\,\psi\big)_\omega
\end{align}
defines a state $\omega_\psi\in\mathbb{E}(\CCU(\CA))$.
\end{remark}

\begin{remark} \label{rem:normal}
In the case that $\CCU(\CA)$ is a unital $C^*$-algebra, the GNS construction allows for the definition of a \emph{normal state}: Let $\omega\in\mathbb{E}(\CCU(\CA))$ be a reference state, and let $\rho$ be a positive symmetric trace-class operator on $\CCH_\omega$ which is normalised as $\Tr_{\CCH_\omega}(\rho)=1$. We can then define a new state $\omega_\rho\in\mathbb{E}(\CCU(\CA))$ by 
\begin{align}
\omega_\rho(A) = \Tr_{\CCH_\omega}\big(\rho\,\pi_\omega(A)\big) \ ,
\end{align}
for $A\in\CCU(\CA)$. The operator $\rho$ plays the role of a density matrix in quantum mechanics.
\end{remark}

\subsection*{\underline{\sf Pure and mixed states}}

For any two states $\omega_1$ and $\omega_2$, the convex linear combination $t\,\omega_1+(1-t)\,\omega_2$ with $t\in[0,1]$ is again a state. The space of states $\mathbb{E}(\CCU(\CA))$ is thus a convex body, whose extremal points are called \emph{pure states} and denoted $\mathbb{P}(\CCU(\CA))$; states in the complement $\mathbb{E}(\CCU(\CA))\setminus \mathbb{P}(\CCU(\CA))$ are said to be \emph{mixed}. Any state $\omega\in\mathbb{E}(\CCU(\CA))$ can be decomposed into a convex linear combination of pure states. 

The GNS construction provides a characterisation of pure states through~\cite{Khavkine:2014mta}

\begin{theorem}
A state $\omega\in\mathbb{E}(\CCU(\CA))$ is pure if and only if its GNS representation $(\pi_\omega,\psi_\unit)$ is weakly irreducible.
\end{theorem}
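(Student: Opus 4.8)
The plan is to prove the standard equivalence between purity of a state $\omega$ and weak irreducibility of its GNS representation $(\pi_\omega,\psi_\unit)$, adapted to the associative unital $\ast$-algebra $\CCU(\CA)$. Recall that a $\ast$-representation is weakly irreducible if it admits no nontrivial closed invariant subspace, equivalently (via the bicommutant heuristic) no nontrivial positive operator in $\pi_\omega(\CCU(\CA))'$ other than scalar multiples of the identity. The proof splits into the two implications.

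First I would show: if $\omega$ is \emph{not} pure, then $\pi_\omega$ is not weakly irreducible. Write $\omega = t\,\omega_1 + (1-t)\,\omega_2$ with $t\in(0,1)$ and $\omega_1\neq\omega_2$ states. Then $\omega_1$ is dominated by $\omega$, i.e. $\omega_1(A^\ast\circ A)\leqslant \frac1t\,\omega(A^\ast\circ A)$ for all $A$. This means the sesquilinear form $(\psi_A,\psi_B)\mapsto\omega_1(A^\ast\circ B)$ is well-defined and bounded on $\CCH_\omega$, so by Riesz representation (on the completion, or on the dense subspace with the usual care for pre-Hilbert spaces) there is a bounded positive operator $T$ on $\CCH_\omega$ with $\omega_1(A^\ast\circ B)=(\psi_A,T\,\psi_B)_\omega$. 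Using that both $\omega$ and $\omega_1$ are states over the \emph{associative} algebra $\CCU(\CA)$ and the cyclicity relation $\psi_{A\circ B}=\pi_\omega(A)\psi_B$, one checks $T$ commutes with every $\pi_\omega(A)$. Since $\omega_1\neq\omega$, $T$ is not a scalar multiple of the identity, so a spectral projection of $T$ gives a nontrivial closed invariant subspace, contradicting weak irreducibility.

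Conversely, suppose $\pi_\omega$ is not weakly irreducible: there is a nontrivial closed invariant subspace, hence a nontrivial orthogonal projection $P$ in the commutant $\pi_\omega(\CCU(\CA))'$. Set $\psi_1 = P\,\psi_\unit$ and $\psi_2=(\id-P)\,\psi_\unit$; at least one is nonzero and by invariance neither of the resulting states is zero provided $\psi_\unit$ is not itself in the range or kernel of $P$ — the point here is that $\psi_\unit$ is cyclic, so $P\psi_\unit=0$ would force $P=0$ and $(\id-P)\psi_\unit=0$ would force $P=\id$. Then define $\omega_i(A) = (\psi_i,\pi_\omega(A)\,\psi_i)_\omega$, which are (unnormalized) positive functionals by Remark~\ref{rem:unitvector}; after normalizing, $\omega = \|\psi_1\|^2\,\omega_1 + \|\psi_2\|^2\,\omega_2$ with $\|\psi_1\|^2+\|\psi_2\|^2=1$ and both coefficients in $(0,1)$, and $\omega_1\neq\omega_2$ because $P$ is a nontrivial commutant element (if they coincided, $P$ would be determined to be scalar on the cyclic vector, hence scalar everywhere). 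This exhibits $\omega$ as a nontrivial convex combination, so $\omega$ is mixed.

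The main obstacle is the pre-Hilbert (non-complete) setting: unlike the $C^*$-algebra case treated in~\cite{Khavkine:2014mta}, $\CCH_\omega$ need not be complete, so the Riesz representation step and the spectral projection of the positive commutant operator $T$ must be handled on the completion $\overline{\CCH_\omega}$, then pulled back. One must check that the invariant subspace obtained downstairs intersects the dense subspace $\CCH_\omega$ nontrivially and that the decomposition of $\omega$ only involves the original dense domain — this is exactly the kind of care~\cite{Khavkine:2014mta} takes, and I would cite that reference for the technical closure arguments rather than redo them, noting that $\CCU(\CA)$ being associative is precisely what makes all these standard manipulations go through verbatim.
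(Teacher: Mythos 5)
The paper does not actually prove this theorem: it is quoted from~\cite{Khavkine:2014mta} with no argument supplied, so there is no internal proof to compare yours against. Your overall strategy --- setting up the correspondence between positive functionals dominated by $\omega$ and positive operators in the commutant of $\pi_\omega$ --- is the right one, and the construction of $T$ via domination and Riesz representation in your first direction is correct.

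There is, however, a genuine gap at the point where you pass between the commutant and invariant subspaces. You define weak irreducibility as the absence of nontrivial closed invariant subspaces and then invoke the ``bicommutant heuristic'' to identify this with triviality of the commutant; outside the $C^*$ setting this identification is precisely what fails. Since $\pi_\omega$ acts by generally unbounded operators on the pre-Hilbert space $\CCH_\omega$, the relevant object is the \emph{weak} commutant, i.e.\ the set of bounded $T$ with $(T\phi,\pi_\omega(A)\,\psi)_\omega=(\pi_\omega(A^*)\,\phi,T\psi)_\omega$ for all $\phi,\psi\in\CCH_\omega$ and $A\in\CCU(\CA)$. This is a weakly closed, $\ast$-invariant \emph{linear subspace} of the bounded operators but not an algebra, hence it is not generated by its projections: the spectral projections of your $T$ need not lie in it (so your first direction does not actually produce an invariant subspace), and conversely failure of irreducibility need not hand you a projection $P$ to start your second direction from. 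The repair is to take ``weakly irreducible'' to mean ``trivial weak commutant'' --- which is the definition used in~\cite{Khavkine:2014mta} --- and to run both directions entirely at the level of positive operators $T$ in the weak commutant with $0\leqslant T\leqslant\id$. Your first direction is then complete as soon as $T$ is shown to be non-scalar, with no spectral projection needed; for the converse, take a non-scalar self-adjoint $T$ in the weak commutant, rescale it into $[0,\id]$, set $\lambda(A):=(\psi_\unit,T\,\pi_\omega(A)\,\psi_\unit)_\omega$, and check that $\lambda$ and $\omega-\lambda$ are both positive and that neither is proportional to $\omega$. This yields the nontrivial convex decomposition directly, and also sidesteps the domain problem in your projection argument, namely that $P\,\psi_\unit$ need not lie in the dense domain $\CCH_\omega$ on which $\pi_\omega(A)$ is defined.
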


In the case that $\CCU(\CA)$ is a unital $C^*$-algebra, one can drop the adjective `weakly'. If $\omega$ is pure, then the states of Remark~\ref{rem:unitvector} are also pure. 

Another way to characterise the mixing in a state in quantum mechanics is through its entropy. To give an algebraic notion of the entropy of a state, we follow~\cite{Facchi:2021taa}. Given a probability vector $\vec p=(p_1,\dots,p_N)$, i.e. $p_l\geqslant0$ and $\sum_{l=1}^N\,p_l=1$, define its \emph{Shannon entropy} by
\begin{align}
\ttH(\vec p\,) = -\sum_{l=1}^N\,p_l\log p_l \ .
\end{align}

\begin{definition}
The \emph{entropy} $\ttS(\omega)\in\FR_{\geqslant0}$ of a state $\omega\in\mathbb{E}(\CCU(\CA))$ is the minimal Shannon entropy among all decompositions of $\omega$ into pure states:
\begin{align}
\ttS(\omega) = \inf\Big\{\ttH(\vec p\,) \ \Big| \ \omega=\mbox{$\sum\limits_{l=1}^N$}\,p_l\,\omega_l \ , \ p_l\geqslant0 \ , \ \mbox{$\sum\limits_{l=1}^N$}\,p_l=1 \ , \ \omega_l\in\mathbb{P}(\CCU(\CA)) \Big\} \ .
\end{align}
\end{definition}

It follows that pure states can be characterised as having zero entropy, while mixed states have positive entropy.

\section{Observables and their eigenstates}
\label{sec:eigen}

To describe measurements in quantum mechanics, we need to introduce the notion of an observable to measure.

\begin{definition}
An \emph{observable} of the enveloping algebra $\CCU(\CA)$ is an element $O\in\CCU(\CA)$ which is real for the $\ast$-structure on $\CCU(\CA)$, i.e. $O^\ast = O$. The set of observables forms a vector space over $\FR$ which we denote by $\obs(\CCU(\CA))$.
\end{definition}

Observables have real-valued states $\omega$: Applying \eqref{eqlem:conj} with $B=\unit$ shows that 
\begin{align}
\omega(A^\ast)=\overline{\omega(A)}
\end{align}
for all $A\in\CCU(\CA)$, hence $\langle O\rangle_\omega = \omega(O)\in\FR$ for all $O\in\obs(\CCU(\CA))$. Thus any state $\omega$ is a linear functional on the real vector space $\obs(\CCU(\CA))$. Observables also map to symmetric operators in the corresponding GNS representations, as $\pi_\omega(O)=\pi_\omega(O^\ast)=\pi_\omega(O)^\dag$ for $\omega\in\mathbb{E}(\CCU(\CA))$ and $O\in\obs(\CCU(\CA))$.

We introduce the notation
\begin{align} \label{eq:uncertaintydef}
\Delta_\omega O := \sqrt{\omega\big((O-\langle O\rangle_\omega\unit)\circ(O-\langle O\rangle_\omega\unit)\big)}
\end{align}
for the \emph{uncertainty} of the measurement of an observable  $O\in\obs(\CCU(\CA))$ in a state $\omega\in\mathbb{E}(\CCU(\CA))$.

\begin{lemma}[\textbf{Uncertainty relations}] \label{lem:uncert}
Let $O_1,O_2\in\obs(\CCU(\CA))$ be any two observables. Then
\begin{align}
\Delta_\omega O_1 \, \Delta_\omega O_2 \geqslant \tfrac12\,\big|\big\langle[O_1,O_2]_\circ\big\rangle_\omega\big| \ ,
\end{align}
where 
\begin{align}
[O_1,O_2]_\circ =O_1\circ O_2 - O_2\circ O_1 \ .
\end{align}
\end{lemma}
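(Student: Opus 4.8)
The plan is to follow the standard Robertson-type derivation of the uncertainty relation, which goes through entirely because $\CCU(\CA)$ is an honest associative unital $\ast$-algebra and $\omega$ is a positive normalized functional on it, so Lemma~\ref{lem:CSinequality} (Cauchy--Schwartz) is available verbatim. First I would reduce to the centered observables: set $\widetilde O_i := O_i - \langle O_i\rangle_\omega\,\unit$ for $i=1,2$, which are again observables ($\widetilde O_i^\ast = \widetilde O_i$ since $\langle O_i\rangle_\omega\in\FR$), and note that $[\,\widetilde O_1,\widetilde O_2]_\circ = [O_1,O_2]_\circ$ because the subtracted multiples of $\unit$ are central. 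By definition \eqref{eq:uncertaintydef}, $\Delta_\omega O_i = \sqrt{\omega(\widetilde O_i\circ\widetilde O_i)} = \sqrt{\omega(\widetilde O_i^\ast\circ\widetilde O_i)}$.

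Next I would apply the Cauchy--Schwartz inequality \eqref{eqlem:CSineq} with $A=\widetilde O_1$ and $B=\widetilde O_2$ to get
\begin{align}
\big|\omega(\widetilde O_1\circ\widetilde O_2)\big|^2 \leqslant \omega(\widetilde O_1\circ\widetilde O_1)\,\omega(\widetilde O_2\circ\widetilde O_2) = (\Delta_\omega O_1)^2\,(\Delta_\omega O_2)^2 \ ,
\end{align}
so that $\Delta_\omega O_1\,\Delta_\omega O_2 \geqslant \big|\omega(\widetilde O_1\circ\widetilde O_2)\big|$. It then remains to bound $|\omega(\widetilde O_1\circ\widetilde O_2)|$ from below by $\tfrac12|\langle[O_1,O_2]_\circ\rangle_\omega|$. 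For this I would write $\widetilde O_1\circ\widetilde O_2 = \tfrac12\{\widetilde O_1,\widetilde O_2\}_\circ + \tfrac12[\widetilde O_1,\widetilde O_2]_\circ$ with $\{X,Y\}_\circ := X\circ Y + Y\circ X$. The anticommutator $\{\widetilde O_1,\widetilde O_2\}_\circ$ is self-adjoint under ${}^\ast$, hence $\omega$ of it is real, while $[\widetilde O_1,\widetilde O_2]_\circ$ is anti-self-adjoint (its ${}^\ast$ is its negative, using $(X\circ Y)^\ast = Y^\ast\circ X^\ast$ and self-adjointness of the $\widetilde O_i$), hence $\omega$ of it is purely imaginary by the reality property $\omega(A^\ast)=\overline{\omega(A)}$ noted just before the lemma. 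Therefore $\tfrac12\omega(\{\widetilde O_1,\widetilde O_2\}_\circ)$ is the real part and $\tfrac12\omega([\widetilde O_1,\widetilde O_2]_\circ)$ is the imaginary part of $\omega(\widetilde O_1\circ\widetilde O_2)$, so
\begin{align}
\big|\omega(\widetilde O_1\circ\widetilde O_2)\big|^2 = \tfrac14\big|\omega(\{\widetilde O_1,\widetilde O_2\}_\circ)\big|^2 + \tfrac14\big|\omega([\widetilde O_1,\widetilde O_2]_\circ)\big|^2 \geqslant \tfrac14\big|\omega([\widetilde O_1,\widetilde O_2]_\circ)\big|^2 \ .
\end{align}
Combining with the previous display and $[\widetilde O_1,\widetilde O_2]_\circ = [O_1,O_2]_\circ$ gives the claim.

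There is no serious obstacle here: the whole point of passing to the enveloping algebra is that $(\CCU(\CA),\circ,{}^\ast,\unit)$ is associative, so the classical argument applies without modification, and nonassociativity of $\CA$ never enters. The only points needing a line of care are the self-adjointness bookkeeping for the commutator and anticommutator (which uses $(X\circ Y)^\ast=Y^\ast\circ X^\ast$ in $\CCU(\CA)$, valid since ${}^\ast$ is an antihomomorphism of the associative product $\circ$) and the orthogonal decomposition of $\omega(\widetilde O_1\circ\widetilde O_2)$ into its real and imaginary parts via the parallelogram-type identity $|z|^2 = (\Re z)^2 + (\Im z)^2$; neither is delicate.
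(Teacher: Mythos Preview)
Your proof is correct and follows essentially the same route as the paper: apply Cauchy--Schwartz to the centered observables, split $\widetilde O_1\circ\widetilde O_2$ into its self-adjoint anticommutator and anti-self-adjoint commutator parts, and drop the anticommutator contribution. The only difference is cosmetic --- you spell out explicitly why the decomposition yields the real and imaginary parts of $\omega(\widetilde O_1\circ\widetilde O_2)$, which the paper leaves implicit.
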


\begin{proof}
By Lemma~\ref{lem:CSinequality} we get
\begin{align}
\begin{split}
(\Delta_\omega O_1)^2\,(\Delta_\omega O_2)^2 &\geqslant \big|\omega\big((O_1-\langle O_1\rangle_\omega\unit)\circ(O_2-\langle O_2\rangle_\omega\unit\big)\big|^2 \\[4pt]
&= \tfrac14\,\big|\langle[O_1,O_2]_\circ\rangle_\omega\big|^2 + \tfrac14\,\big|\langle\{O_1-\langle O_1\rangle_\omega\unit,O_2-\langle O_2\rangle_\omega\unit\}_\circ\rangle_\omega\big|^2 \\[4pt]
&\geqslant \tfrac14\,\big|\langle[O_1,O_2]_\circ\rangle_\omega\big|^2
\end{split}
\end{align}
where
\begin{align}
\{O_1,O_2\}_\circ =O_1\circ O_2 + O_2\circ O_1 \ ,
\end{align}
and the result follows.
\end{proof}

We now adapt the notion of eigenstates in general $C^*$-algebras, which is discussed in~\cite{Rinehart21,DeNittis:2023ppc}, to our setting.

\begin{definition}\label{def:C*eigenvalue}
Let $A\in\CCU(\CA)$. A state $\omega:\CCU(\CA)\longrightarrow\FC$ is an \emph{eigenstate} of $A$ with \emph{eigenvalue} $\lambda\in\FC$ if
\begin{align}
\omega(B\circ A)=\lambda\,\omega(B)
\end{align}
for all $B\in\CCU(\CA)$.
\end{definition}

\begin{remark} \label{rem:linindep}
We can compare eigenstates of any element $A\in\CCU(\CA)$ associated to distinct eigenvalues~\cite{Rinehart21,DeNittis:2023ppc}: Any collection $\{\omega_1,\dots,\omega_N\}\subset\mathbb{E}(\CCU(\CA))$ of eigenstates of $A\in\CCU(\CA)$ with distinct eigenvalues $\{\lambda_1,\dots,\lambda_N\}\subset\FC$ is \emph{linearly independent}, i.e. if $\sum_{i=1}^N\,p_i\,\omega_i=0$ with $p_i\in\FC$, then $p_1=\cdots=p_N=0$.
\end{remark}

\begin{proposition} \label{prop:evreal}
Observables of $\CCU(\CA)$ have real eigenvalues.
\end{proposition}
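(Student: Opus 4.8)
The plan is to show that if $O\in\obs(\CCU(\CA))$ and $\omega$ is an eigenstate of $O$ with eigenvalue $\lambda\in\FC$, then $\lambda\in\FR$. The natural route is to evaluate the defining relation $\omega(B\circ O)=\lambda\,\omega(B)$ on a cleverly chosen $B$ and exploit the reality $O^\ast = O$ together with the conjugation property $\omega(A^\ast)=\overline{\omega(A)}$ established just before Lemma~\ref{lem:uncert} (which itself came from \eqref{eqlem:conj} with $B=\unit$).

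First I would take $B=\unit$ in Definition~\ref{def:C*eigenvalue} to get $\omega(O)=\lambda\,\omega(\unit)=\lambda$, so that the eigenvalue is literally the expectation value $\langle O\rangle_\omega$. Since $O^\ast=O$, the reality of expectation values of observables gives $\lambda=\omega(O)=\omega(O^\ast)=\overline{\omega(O)}=\overline\lambda$, hence $\lambda\in\FR$. This is essentially a one-line argument once one observes that an eigenstate's eigenvalue is forced to coincide with the expectation value of the element in that state.

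One should, however, double-check the subtlety that Definition~\ref{def:C*eigenvalue} is stated for arbitrary $A\in\CCU(\CA)$, not just observables, and that the claim here is specifically about $O$ with $O^\ast=O$; the argument above uses reality of $O$ in exactly the step $\omega(O^\ast)=\omega(O)$, so no extra hypotheses are needed. It may also be worth remarking, for consistency with the $C^*$-algebra literature cited (\cite{Rinehart21,DeNittis:2023ppc}), that this matches the familiar statement that self-adjoint operators have real spectral values, the associativity of $\CCU(\CA)$ ensuring that the GNS representation $\pi_\omega(O)$ is a genuine symmetric operator whose relevant expectation is $\lambda$.

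The main (and only) potential obstacle is purely bookkeeping: making sure that the relation $\omega(A^\ast)=\overline{\omega(A)}$ is invoked correctly — it holds for \emph{all} $A\in\CCU(\CA)$ as derived in \S\ref{sec:eigen} — and that one does not accidentally need positivity or the Cauchy--Schwartz inequality, which are not required here. No nonassociative input is used; the statement lives entirely within the associative $\ast$-algebra $\CCU(\CA)$.
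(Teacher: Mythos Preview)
Your proposal is correct and follows exactly the same route as the paper: set $B=\unit$ in Definition~\ref{def:C*eigenvalue} to obtain $\lambda=\omega(O)$, then use $O^\ast=O$ together with $\omega(A^\ast)=\overline{\omega(A)}$ to conclude $\lambda=\overline\lambda$. There is nothing to add.
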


\begin{proof}
This follows easily by setting $B=\unit$ in Definition~\ref{def:C*eigenvalue} which shows that
\begin{align}
\lambda = \omega(O) = \omega(O^*) = \overline{\omega(O)} = \overline\lambda \ ,
\end{align}
for an eigenstate $\omega\in\mathbb{E}(\CCU(\CA))$ of an observable $O\in\obs(\CCU(\CA))$ with eigenvalue $\lambda\in\FC$.
\end{proof}

By direct computation and the Cauchy-Schwartz inequality, one shows that $\omega$ is an eigenstate of $A$ with eigenvalue $\lambda$ if and only if~\cite{Rinehart21,DeNittis:2023ppc}
\begin{align}\label{eq:eveeqn}
\omega\big((A-\lambda\unit)^\ast\circ(A-\lambda\unit)\big) = 0 \ .
\end{align}
Let $(\pi_\omega,\psi_\unit)$ be the GNS representation of the enveloping algebra $\CCU(\CA)$. It follows from \eqref{eq:eveeqn} that $\omega$ is an eigenstate of $A\in\CCU(\CA)$ with eigenvalue $\lambda = \langle A\rangle_\omega$ if and only if 
\begin{align}
\pi_\omega(A)\,\psi_\unit = \lambda\,\psi_\unit \ ,
\end{align}
or equivalently $\psi_A=\psi_{A\circ \unit}=\lambda\,\psi_\unit$.

We can also easily show that eigenstates correspond to states of zero uncertainty through 

\begin{proposition}
If $\omega\in\mathbb{E}(\CCU(\CA))$ is an eigenstate of an observable $O\in\obs(\CCU(\CA))$, then $\Delta_\omega O=0$.
\end{proposition}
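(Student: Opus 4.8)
The plan is to compute $\Delta_\omega O$ directly from its definition \eqref{eq:uncertaintydef}, using the eigenstate condition in its equivalent form \eqref{eq:eveeqn}. First I would recall that, by Proposition~\ref{prop:evreal}, the eigenvalue $\lambda$ of the observable $O$ is real, so $\lambda\unit$ is itself an observable and $(\lambda\unit)^\ast = \lambda\unit$. Hence $(O-\lambda\unit)^\ast = O-\lambda\unit$, and the quantity under the square root in \eqref{eq:uncertaintydef} can be rewritten as $\omega\big((O-\lambda\unit)^\ast\circ(O-\lambda\unit)\big)$.

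Next I would identify the eigenvalue: setting $B=\unit$ in Definition~\ref{def:C*eigenvalue} gives $\lambda = \omega(O) = \langle O\rangle_\omega$, so that $O - \langle O\rangle_\omega\unit = O-\lambda\unit$. Then the eigenstate property in the form \eqref{eq:eveeqn}, namely $\omega\big((O-\lambda\unit)^\ast\circ(O-\lambda\unit)\big)=0$, shows immediately that the radicand vanishes, whence $\Delta_\omega O = 0$. Concretely, the one-line computation is
\begin{align}
(\Delta_\omega O)^2 = \omega\big((O-\langle O\rangle_\omega\unit)\circ(O-\langle O\rangle_\omega\unit)\big) = \omega\big((O-\lambda\unit)^\ast\circ(O-\lambda\unit)\big) = 0 \ .
\end{align}

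There is essentially no obstacle here; the only point requiring a little care is the logical chain behind \eqref{eq:eveeqn}, which the text attributes to a direct Cauchy-Schwartz argument. If one wishes to be self-contained rather than cite \eqref{eq:eveeqn}, one expands $\omega\big((O-\lambda\unit)^\ast\circ(O-\lambda\unit)\big) = \omega(O^\ast\circ O) - \overline\lambda\,\omega(O) - \lambda\,\omega(O^\ast) + |\lambda|^2$ and applies Definition~\ref{def:C*eigenvalue} with $B = O^\ast$ to evaluate $\omega(O^\ast\circ O) = \lambda\,\omega(O^\ast) = \lambda\,\overline{\omega(O)} = |\lambda|^2$, together with $\omega(O) = \lambda$ and $\omega(O^\ast) = \overline\lambda$; the terms then cancel in pairs. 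Either way the proof is a short substitution.
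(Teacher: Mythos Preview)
The proposal is correct and follows exactly the paper's own argument: identify the eigenvalue as $\lambda=\langle O\rangle_\omega$, then invoke the characterisation \eqref{eq:eveeqn} together with the definition \eqref{eq:uncertaintydef} of the uncertainty. Your write-up is simply more detailed than the paper's one-line proof, in particular spelling out why $(O-\lambda\unit)^\ast=O-\lambda\unit$ via Proposition~\ref{prop:evreal}, and offering an optional direct expansion in lieu of citing \eqref{eq:eveeqn}.
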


\begin{proof}
If $\lambda=\omega(O)=\langle O\rangle_\omega\in\FR$ is the corresponding eigenvalue, then the result follows immediately from \eqref{eq:eveeqn} and the definition of uncertainty from \eqref{eq:uncertaintydef}.
\end{proof}

\section{Probing nonassociativity: States from traces}
\label{sec:trace}

We now come to our main construction of what should be genuinely called nonassociative quantum mechanics. We wish to adapt the constructions of \S\ref{sec:GNS} and \S\ref{sec:eigen} to the nonassociative algebra $\CA$, without recourse to the full associative enveloping algebra $\CCU(\CA)$. To show that states actually exist in this setting, we introduce the notion of trace on our original unital $*$-algebra $(\CA,{}^*,\unit)$. 

\begin{definition}\label{def:trace}
A \emph{trace} on $\CA$ is a $\FC$-linear functional $\tau:\CA\longrightarrow\FC$ which is
\begin{myitemize}
\item Positive: $\tau(a^*\,a)\geqslant0$;
\item Normalized: $\tau(\unit)=1$;
\item $2$-cyclic: $\tau(a\,b) = \tau(b\,a)$; and
\item $3$-cyclic: $\tau\big(a\,(b\,c)\big) = \tau\big(c\,(a\,b)\big)$;
\end{myitemize}
for all $a,b,c\in\CA$.
\end{definition}

\begin{remark} \label{rem:pos2}
This definition of trace gives us our second notion of positivity in nonassociative quantum mechanics. It does not automatically come with the theory: its existence is an assumption that needs to be checked for a given nonassociative observer algebra.
\end{remark}

Using $2$-cyclicity, the $3$-cyclicity condition can be equivalently written as $\tau\big(a\,(b\,c)\big) = \tau\big((a\,b)\,c\big)$. By induction one can furthermore prove that
\begin{align} \label{eq:tauhigher}
\tau\big(a_1\,(a_2\,(\cdots a_n))\big) = \tau\big(((a_1\,a_2)\cdots)\,a_n\big)
\end{align}
for all $a_1,\dots,a_n\in\CA$. However, this does \emph{not} imply that nonassociativity disappears completely under application of $\tau$. The rebracketed expressions that generally yield distinct traces can be characterised as follows~\cite{Mylonas:2013jha}. 

The total number of ways of pairwise bracketing a product of $n\geqslant1$ elements $a_1,\dots,a_n\in\CA$ is the Catalan number \smash{$C_{n-1} = \frac{(2n-2)!}{(n-1)!\,n!}$}. Starting from the trace \eqref{eq:tauhigher}, by $3$-cyclicity it is equal to a number of traces with different bracketings but generally distinct from any other bracketing with $a_1$ unbracketed on the left of the argument of $\tau$. The distinct ways of bracketing an $n$-fold product of elements of $\CA$ in the argument of the trace $\tau$ is thus organised into $C_{n-2}$ classes, one for each different bracketing where $a_1$ is unbracketed on the left.

\begin{example} 
For $n=4$ there are $C_3=5$ different bracketings, two of which are of the form of the left-hand side of \eqref{eq:tauhigher}. Thus there are $C_2=2$ distinct classes of inequivalent bracketings, namely
\begin{align} \label{eq:4bracketing1}
\tau\big(a_1\,(a_2\,(a_3\,a_4))\big) = \tau\big((a_1\,a_2)\,(a_3\,a_4)\big) = \tau\big(((a_1\,a_2)\,a_3)\,a_4\big)
\end{align}
and
\begin{align}
\tau\big(a_1\,((a_2\,a_3)\,a_4)\big) = \tau\big((a_1\,(a_2\,a_3))\,a_4\big) \ .
\end{align}
\end{example}

A trace $\tau$ in the sense of Definition~\ref{def:trace} provides a good notion of a state over the nonassociative algebra $\CA$ in the sense of Definition~\ref{def:state}, i.e. $\tau(a^*\,a)\geqslant0$ for all $a\in\CA$ and $\tau(\unit)=1$. Suppose we try to construct further states $\mu$ over $\CA$ from $\tau$, i.e. $\FC$-linear functionals $\mu:\CA\longrightarrow\FC$ satisfying $\mu(a^*\,a)\geqslant0$ for all $a\in\CA$ and $\mu(\unit)=1$. In general, this is not possible: analogously to Remark~\ref{rem:unitvector}, the natural choice would be to take $\mu(a) = \tau\big(\psi^*\,(a\,\psi)\big)$ for a fixed element $\psi\in\CA$, but this fails to be a state because $\tau\big(\psi^*\,((a^*\,a)\,\psi)\big)$ is not generally positive semi-definite.

However $\tau\big((a\,\psi)^*\,(a\,\psi)\big)\geqslant0$, and unravelling this expression using the $2$-cyclicity and $3$-cyclicity properties of $\tau$ expressed by \eqref{eq:4bracketing1} gives
\begin{align}
\tau\big((\psi^*\,a^*)\,(a\,\psi)\big) = \tau\big(\psi^*\,(a^*\,(a\,\psi))\big) = \tau\big(\psi^*\,((\widehat a^*\circ\widehat a)\,\triangleright\,\psi)\big) \ .
\end{align}
This shows that the correct way to generate further states from the trace $\tau$ proceeds through the enveloping algebra $\CCU(\CA)$, via the embedding \smash{$\widehat\CA\subset\CCU(\CA)$}, and it leads to the following general construction.

If $\CA$ is equipped with a trace $\tau$, then there is a distinguished state $\omega_\tau:\CCU(\CA)\longrightarrow\FC$ on the enveloping algebra defined by
\begin{align}
\omega_\tau(A):=\tau(A\,\triangleright\,\unit) \ ,
\end{align}
for all $A\in\CCU(\CA)$. We call $\omega_\tau$ the \emph{tracial state} on $\CCU(\CA)$ with respect to $\tau$. 
Note that $\omega_\tau(\widehat a)=\tau(a)=\omega_\tau(\widehat{a'})$ for all $a\in\CA$, and from \eqref{eq:tauhigher} it follows that $\tau(A\,\triangleright\,\unit) = \tau(A'\,\triangleright\,\unit)$. With a slight abuse of notation, in the following we will denote the {tracial state} simply  by $A\longmapsto\tau(A)$ for $A\in\CCU(\CA)$. 

\begin{remark} \label{rem:losscyclicity}
In traditional quantum mechanics, the state given by the normalized trace is completely mixed: it has maximal ignorance and carries zero information about the system. In our setting, the tracial state $\tau$ does \emph{not} generally define a trace on the enveloping algebra $\CCU(\CA)$, because cyclicity is lost:
\begin{align}
\tau(A\circ B) = \tau\big((A'\,\triangleright\,\unit)\,(B\,\triangleright\,\unit)\big) \ \neq \ \tau\big((B'\,\triangleright\,\unit)\,(A\,\triangleright\,\unit)\big) = \tau(B\circ A) \ ,
\end{align}
for generic $A,B\in\CCU(\CA)$. Hence, by Lemma~\ref{lem:uncert}, uncertainties can  be captured by this state. In particular, $\tau(A^*\circ A)\neq\tau(A\circ A^*)$ generally unless the element $A\in\CCU(\CA)$ is \emph{normal}: $A^*\circ A=A\circ A^*$. This is markedly different to the situation in traditional quantum mechanics.
\end{remark}

More generally, in analogy to the states of Remark~\ref{rem:unitvector}, we can associate a state to any element of the nonassociative algebra $\CA$ through

\begin{proposition}\label{prop:omegapsitau}
Let $\tau$ be a trace on $\CA$, and let $\psi\in\CA$ be a fixed element which is normalized in the sense that $\tau(\psi^*\,\psi)=1$. Then the assignment
\begin{align}
A\longmapsto\omega_\psi(A) := \tau\big(\psi^*\,(A\,\triangleright\, \psi)\big)
\end{align}
defines a state $\omega_\psi\in\mathbb{E}(\CCU(\CA))$.
\end{proposition}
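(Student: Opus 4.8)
The goal is to verify the two defining properties of a state from Definition~\ref{def:state}: normalization and positivity. Normalization is immediate: since $\unit\,\triangleright\,\psi=\psi$, we have $\omega_\psi(\unit)=\tau(\psi^*\,\psi)=1$ by the hypothesis on $\psi$. Linearity over $\FC$ is inherited from linearity of the left action $\triangleright$ and of the trace $\tau$. So the entire content of the proposition is the positivity condition $\omega_\psi(A^*\circ A)\geqslant0$ for all $A\in\CCU(\CA)$.

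For positivity, the plan is to reduce $\omega_\psi(A^*\circ A)=\tau\big(\psi^*\,((A^*\circ A)\,\triangleright\,\psi)\big)$ to something of the manifestly positive form $\tau(b^*\,b)$ for a suitable $b\in\CA$. The natural candidate is $b=A\,\triangleright\,\psi$. The key identity needed is
\begin{align}
\tau\big(\psi^*\,\big((A^*\circ A)\,\triangleright\,\psi\big)\big) = \tau\big((A\,\triangleright\,\psi)^*\,(A\,\triangleright\,\psi)\big) \ ,
\end{align}
after which positivity of $\tau$ finishes the argument. This is exactly the computation foreshadowed in the paragraph just before the proposition, where for $A=\widehat a$ one has $\tau\big(\psi^*\,((\widehat a^*\circ\widehat a)\,\triangleright\,\psi)\big)=\tau\big((a\,\psi)^*\,(a\,\psi)\big)$ using $2$- and $3$-cyclicity via \eqref{eq:4bracketing1}. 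The task is to promote this from generators to arbitrary $A\in\CCU(\CA)$.

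I would carry this out in two steps. First, establish the "adjunction" identity $\tau\big(x^*\,(A\,\triangleright\,y)\big)=\tau\big((A^*\,\triangleright\,x)^*\,y\big)$ for all $A\in\CCU(\CA)$ and $x,y\in\CA$. For $A=\widehat a$ this reads $\tau\big(x^*\,(a\,y)\big)=\tau\big((a^*\,x)^*\,y\big)=\tau\big((x^*\,a)\,y\big)$, which is precisely the $3$-cyclicity relation $\tau\big(x^*\,(a\,y)\big)=\tau\big((x^*\,a)\,y\big)$ combined with the $*$-structure; for $A=\widehat{a'}$ one uses Lemma~\ref{lem:leftrightmod} and $2$-cyclicity similarly. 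The identity then extends to composition products $A=A_1\circ A_2$ by $\tau\big(x^*\,((A_1\circ A_2)\,\triangleright\,y)\big)=\tau\big(x^*\,(A_1\,\triangleright\,(A_2\,\triangleright\,y))\big)=\tau\big((A_1^*\,\triangleright\,x)^*\,(A_2\,\triangleright\,y)\big)=\tau\big((A_2^*\,\triangleright\,(A_1^*\,\triangleright\,x))^*\,y\big)=\tau\big(((A_1\circ A_2)^*\,\triangleright\,x)^*\,y\big)$, so by induction on the length of composition products and $\FC$-linearity it holds for all $A\in\CCU(\CA)$. Second, apply this with $x=\psi$, $y=A\,\triangleright\,\psi$, and the operator $A$ itself: $\omega_\psi(A^*\circ A)=\tau\big(\psi^*\,((A^*\circ A)\,\triangleright\,\psi)\big)=\tau\big((A\,\triangleright\,\psi)^*\,(A\,\triangleright\,\psi)\big)\geqslant0$ by positivity of $\tau$.

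\textbf{Main obstacle.} The only delicate point is the inductive extension of the adjunction identity to all of $\CCU(\CA)$: one must be careful that the left action is genuinely a left $\CCU(\CA)$-module action (so that $(A_1\circ A_2)\,\triangleright\,y=A_1\,\triangleright\,(A_2\,\triangleright\,y)$, which holds by the very definition of $\CCU(\CA)$ as the multiplication algebra acting on $\CA$) and that the $*$-involution on $\CCU(\CA)$ reverses composition products, i.e. $(A_1\circ A_2)^*=A_2^*\circ A_1^*$, which is built into the construction of the enveloping $*$-algebra. Given those structural facts, the induction is routine. Everything else — normalization, linearity — is immediate, and the genuinely nonassociative input (the $3$-cyclicity of $\tau$, which in the associative case would be redundant) enters only through the base case $A=\widehat a$ of the adjunction identity.
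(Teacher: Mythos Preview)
Your proposal is correct and is essentially the paper's own argument. The paper's proof writes the positivity step as
\[
\omega_\psi(A^*\circ A)=\tau\big(\psi^*\,(A^*\,\triangleright\,(A\,\triangleright\,\psi))\big)=\tau\big((A'^*\,\triangleright\,\psi^*)\,(A\,\triangleright\,\psi)\big)=\tau\big((A\,\triangleright\,\psi)^*\,(A\,\triangleright\,\psi)\big)\geqslant 0,
\]
invoking ``$3$-cyclicity'' for the first equality and Lemma~\ref{lem:leftrightmod} for the second. Your adjunction identity $\tau\big(x^*\,(A\,\triangleright\,y)\big)=\tau\big((A^*\,\triangleright\,x)^*\,y\big)$ is exactly the composite of these two moves, and your inductive argument makes explicit what the paper leaves implicit, namely that the $3$-cyclicity relation $\tau\big(x\,(B\,\triangleright\,y)\big)=\tau\big((B'\,\triangleright\,x)\,y\big)$ propagates from generators $\widehat a,\widehat{a'}$ to arbitrary $B\in\CCU(\CA)$ via the module property and the anti-multiplicativity of the prime involution.
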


\begin{proof}
We check positivity using $3$-cyclicity of the trace $\tau$ along with Lemma~\ref{lem:leftrightmod} to get
\begin{align}
\begin{split}
\omega_\psi(A^*\circ A) &= \tau\big(\psi^*\,(A^*\,\triangleright\,(A\,\triangleright\,\psi))\big) \\[4pt]
&= \tau\big((A'^*\,\triangleright\,\psi^*)\,(A\,\triangleright\,\psi)\big) = \tau\big((A\,\triangleright\,\psi)^* \, (A\,\triangleright\,\psi)\big) \geqslant  0 \ .
\end{split}
\end{align}
The normalization condition follows easily from $\omega_\psi(\unit) = \tau(\psi^*\,\psi)=1$.
\end{proof}

From a more nonassociative perspective, one can apply this construction to an element $a\in\CA$ using the injection \smash{$\widehat\CA\subset\CCU(\CA)$}. By repeatedly applying $2$-cyclicity as well as $3$-cyclicity, the outcome of measurement in the state $\omega_\psi$ of Proposition~\ref{prop:omegapsitau} can be expressed as
\begin{align}
\langle \widehat a\rangle_\psi := \omega_\psi(\widehat a) = \tau\big(\psi^*\,(a\,\psi)\big) =\tau\big((a\,\psi)\,\psi^*\big) = \tau\big(a\,(\psi\,\psi^*)\big) = \tau(\rho_\psi\,a) \ ,
\end{align}
where the positive semi-definite element
\begin{align}
\rho_\psi := \psi\,\psi^*
\end{align}
of $\CA$ plays the role of a \emph{density matrix} in nonassociative quantum mechanics, that is, $\rho_\psi^\ast = \rho_\psi$ and $\tau(\rho_\psi)=1$. Thus the restriction of the state $\omega_\psi$ to \smash{$\widehat{\CA}\subset\CCU(\CA)$} plays the role of a normal state in the sense of Remark~\ref{rem:normal}.

\begin{remark} \label{rem:mixeddensity}
These constructions can be straightforwardly generalized to ``mixed'' density matrices: let $\{p_1,\dots,p_N\}$ with $p_l\geqslant0$ and $\sum_{l=1}^N\,p_l=1$ be probabilities, and let $\{\psi_1,\dots,\psi_N\}\subset\CA$ be a collection of fixed elements of $\CA$ which are normalized as $\tau(\psi_l^*\,\psi_l)=1$ for each $l=1,\dots,N$. Then the convex linear combinations
\begin{align}
\omega_{\psi}(A) = \sum_{l=1}^N \, p_l\ \tau\big(\psi_l^*\,(A\,\triangleright\,\psi_l)\big) \qquad \text{and} \qquad \rho_{\psi} = \sum_{l=1}^N\,p_l\ \psi_l\,\psi^*_l
\end{align}
satisfy exactly the same properties as above. In the following we shall usually suppress these sums, as all our considerations easily extend to these mixed generalizations.
\end{remark}

\subsection*{\underline{\sf Eigenstates}}

We now give an alternative perspective on the eigenvalues of an element $A\in\CCU(\CA)$.

\begin{definition} \label{def:psieigenvalue}
Let $A\in\CCU(\CA)$. An element $\psi\in\CA$ is an \emph{eigenvector} of $A$ with \emph{eigenvalue} $\lambda\in\FC$ if 
\begin{align}
A\,\triangleright\,\psi = \lambda\,\psi \ .
\end{align}
\end{definition}

If $\psi$ is an eigenvector of $A$ with eigenvalue $\lambda$, then $\psi^*$ is an eigenvector of $A'^*$ with eigenvalue $\overline\lambda$, 
\begin{align}
A'^*\,\triangleright\,\psi^* = \overline\lambda \, \psi^* \ ,
\end{align}
which follows easily from Lemma~\ref{lem:leftrightmod}. This definition is motivated by its relation to Definition~\ref{def:C*eigenvalue} through

\begin{proposition} \label{prop:psiC*ev}
Let $\psi\in\CA$ be a normalised eigenvector of $A\in\CCU(\CA)$ with eigenvalue $\lambda$. Then the associated state $\omega_\psi\in\mathbb{E}(\CCU(\CA))$ is an eigenstate of $A$ with eigenvalue $\lambda$.
\end{proposition}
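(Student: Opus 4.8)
The plan is to verify directly the defining condition of Definition~\ref{def:C*eigenvalue}, namely that $\omega_\psi(B\circ A)=\lambda\,\omega_\psi(B)$ for all $B\in\CCU(\CA)$. Starting from the definition of $\omega_\psi$ in Proposition~\ref{prop:omegapsitau}, I would write
\begin{align}
\omega_\psi(B\circ A) = \tau\big(\psi^*\,\big((B\circ A)\,\triangleright\,\psi\big)\big) = \tau\big(\psi^*\,\big(B\,\triangleright\,(A\,\triangleright\,\psi)\big)\big) \ ,
\end{align}
using that $\triangleright$ is a left module action so that the composition product acts by iterated application. Now the eigenvector hypothesis $A\,\triangleright\,\psi=\lambda\,\psi$ plugs in immediately, and $\FC$-linearity of $\triangleright$ in the module slot together with $\FC$-linearity of $\tau$ pulls the scalar $\lambda$ out:
\begin{align}
\tau\big(\psi^*\,\big(B\,\triangleright\,(\lambda\,\psi)\big)\big) = \lambda\,\tau\big(\psi^*\,(B\,\triangleright\,\psi)\big) = \lambda\,\omega_\psi(B) \ .
\end{align}
This is exactly the required identity, so $\omega_\psi$ is an eigenstate of $A$ with eigenvalue $\lambda$.

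The only remaining point is that $\omega_\psi$ is genuinely a state, i.e. positive and normalized, but this is precisely the content of Proposition~\ref{prop:omegapsitau} (whose hypothesis — that $\psi$ is normalized, $\tau(\psi^*\,\psi)=1$ — is assumed here), so nothing new needs to be proved. I would simply cite it.

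Honestly, I do not expect any real obstacle here: the computation is a two-line unravelling that uses only the module-action property of $\triangleright$ and linearity, and does not even touch the cyclicity properties of $\tau$ or Lemma~\ref{lem:leftrightmod}. If there is any subtlety to flag, it is purely bookkeeping — one should be slightly careful that the equality $(B\circ A)\,\triangleright\,\psi = B\,\triangleright\,(A\,\triangleright\,\psi)$ is the defining feature of the left $\CCU(\CA)$-module structure on $\CA$ (as spelled out in \S\ref{sec:enveloping}), applied with $A,B$ arbitrary elements of $\CCU(\CA)$ rather than just generators; but this extends to all of $\CCU(\CA)$ by linearity and the associativity of $\circ$. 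One could optionally remark that this result is the analogue, in the tracial/density-matrix picture, of the GNS statement recorded after \eqref{eq:eveeqn} that $\pi_\omega(A)\,\psi_\unit=\lambda\,\psi_\unit$ characterizes eigenstates — here $\psi\in\CA$ plays the role of the cyclic vector.
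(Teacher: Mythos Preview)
Your proof is correct and is essentially identical to the paper's own argument: the paper also writes $\omega_\psi(B\circ A) = \tau\big(\psi^*\,(B\,\triangleright\,(A\,\triangleright\,\psi))\big) = \lambda\,\tau\big(\psi^*\,(B\,\triangleright\,\psi)\big) = \lambda\,\omega_\psi(B)$ in a single line. Your additional remarks (citing Proposition~\ref{prop:omegapsitau} for $\omega_\psi$ being a state, and noting that $(B\circ A)\,\triangleright\,\psi = B\,\triangleright\,(A\,\triangleright\,\psi)$ is the left module axiom) are accurate but not spelled out in the paper, which simply takes them as understood.
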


\begin{proof}
Let $B\in\CCU(\CA)$. Using $A\,\triangleright\,\psi = \lambda\,\psi$ we easily compute
\begin{align}
\omega_\psi(B\circ A) = \tau\big(\psi^*\,(B\,\triangleright\,(A\,\triangleright\,\psi))\big) = \lambda \, \tau\big(\psi^*\,(B\,\triangleright\,\psi)\big) = \lambda\,\omega_\psi(B) \ ,
\end{align}
as required.
\end{proof}

Propositions~\ref{prop:evreal} and~\ref{prop:psiC*ev} together show that observables $O\in\obs(\CCU(\CA))$ also have real eigenvalues in the sense of Definition~\ref{def:psieigenvalue}.

\subsection*{\underline{\sf Tracial GNS construction}}

The proof of Lemma~\ref{lem:CSinequality} can be repeated \emph{verbatum} to show that the map $(\psi_1,\psi_2)\longmapsto\tau(\psi_1^*\,\psi_2)$ is a semi-definite sesquilinear form for any trace $\tau:\CA\longrightarrow\FC$, satisfying
\begin{align}
\tau(\psi_1^*\,\psi_2) = \overline{\tau(\psi_2^*\,\psi_1)}
\end{align}
as well as the Cauchy-Schwartz inequality
\begin{align}
\big|\tau(\psi_1^*\,\psi_2)\big|^2 \leqslant \tau(\psi_1^*\,\psi_1) \, \tau(\psi_2^*\,\psi_2) \ .
\end{align}
In particular, for any $\psi\in\CA$ and $A\in\CCU(\CA)$, the Cauchy-Schwarz inequality together with $3$-cyclicity of the trace imply
\begin{align}\label{eq:CSApsi}
\begin{split}
\tau\big((A\,\triangleright\,\psi)^*\,(A\,\triangleright\,\psi)\big) &= \tau\big((A'^*\,\triangleright\,\psi^*)\,(A\,\triangleright\,\psi)\big) \\[4pt]
&= \tau\big(\psi^*\,(A^*\,\triangleright\,(A\,\triangleright\,\psi))\big) \\[4pt]
&\leqslant \tau(\psi^*\,\psi) \, \tau\big(((A^*\circ A)\,\triangleright\,\psi)^*\,((A^*\circ A)\,\triangleright\,\psi)\big) \ .
\end{split}
\end{align}

Consider the subspace of `zero norm vectors'
\begin{align}
\CJ_\tau := \big\{\psi\in\CA \ \big| \ \tau(\psi^*\,\psi)=0\big\} \ .
\end{align}
If $\psi\in\CJ_\tau$, then from \eqref{eq:CSApsi} it follows that $A\,\triangleright\,\psi\in\CJ_\tau$ for all $A\in\CCU(\CA)$, i.e. $\CCU(\CA)\,\triangleright\,\CJ_\tau\subseteq\CJ_\tau$. This shows that $\CJ_\tau$ is a left ideal of the nonassociative algebra $\CA$ which forms a left submodule over the enveloping algebra $\CCU(\CA)$. Hence the quotient
\begin{align}
\CH_\tau := \CA \, \big\slash \, \CJ_\tau
\end{align}
is a left $\CCU(\CA)$-module on which the trace $\tau$ restricts to produce an inner product.

We denote the equivalence classes in $\CH_\tau$ by
\begin{align}
[a]=\psi_a := \big\{\widetilde a\in\CA \ \big| \ \widetilde a-a \in\CJ_\tau\big\} \ .
\end{align}
On $\CH_\tau$ we define an inner product
\begin{align} \label{eq:tauinnerprod}
(\psi_a,\psi_b)_\tau := \tau(a^*\,b) \ ,
\end{align}
which is well-defined because $\CJ_\tau$ is a left ideal of $\CA$, and makes $\CH_\tau$ into a pre-Hilbert space over $\FC$. We define a linear representation $\pi_\tau:\CCU(\CA)\longrightarrow\sEnd(\CH_\tau)$ by
\begin{align} \label{eq:taurep}
\pi_\tau(A)\,\psi_a := \psi_{A\,\triangleright\, a} \ .
\end{align}

\begin{proposition}
$\pi_\tau$ is a $\ast$-representation of $\CCU(\CA)$ with vacuum vector $\psi_\unit$.
\end{proposition}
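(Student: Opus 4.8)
The plan is to verify the three defining properties of a cyclic $\ast$-representation: that $\pi_\tau$ is well-defined and a homomorphism for the composition product, that it intertwines the $\ast$-involution on $\CCU(\CA)$ with the adjoint on $\CH_\tau$, and that $\psi_\unit$ is a cyclic (vacuum) vector.

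First I would check that $\pi_\tau$ is well-defined, which was already observed above: since $\CJ_\tau$ is a left $\CCU(\CA)$-submodule of $\CA$ (a consequence of the Cauchy-Schwarz estimate \eqref{eq:CSApsi}), the formula $\pi_\tau(A)\,\psi_a := \psi_{A\,\triangleright\,a}$ does not depend on the representative $a$. Linearity in $a$ is immediate, and $\pi_\tau(A\circ B)\,\psi_a = \psi_{(A\circ B)\,\triangleright\,a} = \psi_{A\,\triangleright\,(B\,\triangleright\,a)} = \pi_\tau(A)\,\pi_\tau(B)\,\psi_a$ because $\CA$ is a left $\CCU(\CA)$-module, i.e. the $\triangleright$-action is associative with respect to the composition product $\circ$ on $\CCU(\CA)$. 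Also $\pi_\tau(\unit)\,\psi_a = \psi_{\unit\,\triangleright\,a} = \psi_a$, so $\pi_\tau$ is a unital algebra homomorphism into $\sEnd(\CH_\tau)$.

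Next I would establish the $\ast$-property $\pi_\tau(A^*) = \pi_\tau(A)^\dag$. For $\psi_a,\psi_b \in \CH_\tau$, using the inner product \eqref{eq:tauinnerprod}, Lemma~\ref{lem:leftrightmod}, and $3$-cyclicity of $\tau$, one computes
\begin{align}
\big(\psi_a,\pi_\tau(A)\,\psi_b\big)_\tau = \tau\big(a^*\,(A\,\triangleright\,b)\big) = \tau\big((A'\,\triangleright\,a^*)\,b\big) = \tau\big(\big((A^*)'^*\,\triangleright\,a^*\big)^{**}{}\cdot b\big),
\end{align}
and then I would rearrange this into $\tau\big((A^*\,\triangleright\,a)^*\,b\big) = \big(\pi_\tau(A^*)\,\psi_a,\psi_b\big)_\tau$, using that $(A'\,\triangleright\,a^*) = (A^*\,\triangleright\,a)^*$, which is precisely the content of Lemma~\ref{lem:leftrightmod} applied with $A^*$ in place of $A$ (so that $(A^*)' = A'^*$ and $(A'^*\,\triangleright\,a^*)^* = A^*\,\triangleright\,a$; one must be a little careful tracking which antiautomorphism is which, but it is the same bookkeeping as in the proof of Proposition~\ref{prop:omegapsitau}). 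The $3$-cyclicity step $\tau\big(a^*\,(A\,\triangleright\,b)\big) = \tau\big((A'\,\triangleright\,a^*)\,b\big)$ is the one genuine use of the trace axioms, and it follows by iterating \eqref{eq:4bracketing1} exactly as in the computation preceding Proposition~\ref{prop:omegapsitau}.

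Finally, cyclicity of $\psi_\unit$ is immediate: for any $\psi_a \in \CH_\tau$ we have $\psi_a = \psi_{a\,\triangleright\,\unit}$? — more precisely $a = \widehat a\,\triangleright\,\unit$, so $\psi_a = \pi_\tau(\widehat a)\,\psi_\unit$, exhibiting $\psi_\unit$ as a cyclic vector whose orbit under $\pi_\tau(\CCU(\CA))$ is all of $\CH_\tau$. The main obstacle is the middle step: correctly threading the two antiautomorphisms — the $\ast$-involution on $\CCU(\CA)$ and the opposite-algebra map $A \mapsto A'$ — through Lemma~\ref{lem:leftrightmod} so that the adjoint relation comes out with $A^*$ rather than $A'^*$ or $A'$; everything else is a direct unwinding of definitions already set up in the excerpt.
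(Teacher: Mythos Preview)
Your proposal is correct and follows essentially the same approach as the paper: well-definedness via the submodule property of $\CJ_\tau$, the homomorphism property from the module axiom $(A\circ B)\triangleright a = A\triangleright(B\triangleright a)$, the $\ast$-property via $3$-cyclicity together with Lemma~\ref{lem:leftrightmod} to pass from $A'\triangleright a^*$ to $(A^*\triangleright a)^*$, and cyclicity of $\psi_\unit$ via $a=\widehat a\triangleright\unit$. Your intermediate display with $((A^*)'^*\triangleright a^*)^{**}$ is unnecessarily convoluted---the paper writes the chain $\tau(a^*(A\triangleright b)) = \tau((A'\triangleright a^*)\,b) = \tau((A^*\triangleright a)^*\,b)$ directly---but you arrive at the same identity.
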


\begin{proof}
$\pi_\tau$ is well-defined since $\CH_\tau$ is a left $\CCU(\CA)$-module.
Since $(A\circ B)\,\triangleright\, a = A\,\triangleright\,(B\,\triangleright\, a)$, it is a representation of $\CCU(\CA)$:
\begin{align}
\pi_\tau(A\circ B) = \pi_\tau(A) \, \pi_\tau(B) \ .
\end{align}
It is moreover a $\ast$-representation, as $2$-cyclicity and $3$-cyclicity of the trace $\tau$ imply
\begin{align}
\big(\psi_a,\pi_\tau(A)\,\psi_b\big)_\tau = \tau\big(a^*\,(A\,\triangleright\, b)\big) = \tau\big((A'\,\triangleright\,a^*)\,b\big) = \tau\big((A^*\,\triangleright\, a)^*\,b\big) = \big(\pi_\tau(A^*)\,\psi_a,\psi_b\big)_\tau \ ,
\end{align}
hence
\begin{align}
\pi_\tau(A^*) = \pi_\tau(A)^\dag \ .
\end{align}
Finally, $\psi_\unit$ is a vacuum vector for the representation, as $\psi_a=\psi_{\widehat a\,\triangleright\,\unit}=\pi_\tau(\widehat a)\,\psi_\unit$ with
\begin{align}
\tau(a) = \big(\psi_\unit,\pi_\tau(\widehat a)\,\psi_\unit\big)_\tau \ ,
\end{align}
for all $a\in\CA$.
\end{proof}

\begin{definition}
Let $\tau$ be a trace on the algebra $(\CA,{}^*,\unit)$. The cyclic $\ast$-representation $(\pi_\tau,\psi_\unit)$ is the \emph{tracial GNS representation} of the enveloping algebra $(\CCU(\CA),\circ,{}^\ast,\unit)$ on the pre-Hilbert space $\CH_\tau=\CA\,/\,\CJ_\tau$. 
\end{definition}

\begin{remark}
Compared to the more standard GNS construction of \S\ref{sec:GNS}, the difference in the tracial GNS representation is that its works directly on the nonassociative algebra $\CA$, rather than on the enveloping algebra $\CCU(\CA)$, thus leading to a smaller pre-Hilbert space $\CH_\tau$. If $\CA$ is associative, it can be identified with the usual GNS representation for the tracial state $\tau$ by Example~\ref{lem:bimodcomp} and Lemma~\ref{lem:leftrightmod}.
\end{remark}

In this representation, the state $\omega_a\in\mathbb{E}(\CCU(\CA))$ associated to an element $a\in\CA$ by Proposition~\ref{prop:omegapsitau} is the expectation value
\begin{align}
\omega_a(A) = \big(\psi_a,\pi_\tau(A)\,\psi_a\big)_\tau
\end{align}
of the operator $\pi_\tau(A)\in\sEnd(\CH_\tau)$ in the vector $\psi_a\in\CH_\tau$. Moreover, if $a\in\CA$ is an eigenvector of $A\in\CCU(\CA)$ in the sense of Definition~\ref{def:psieigenvalue}, then $\psi_a\in\CH_\tau$ is an eigenvector of the operator $\pi_\tau(A)$ in the usual sense:
\begin{align}
\pi_\tau(A)\,\psi_a = \psi_{A\,\triangleright\, a} = \psi_{\lambda\,a} = \lambda\,\psi_a \ .
\end{align}
We may then specialise Remark~\ref{rem:linindep} to

\begin{proposition}
If $\{a_1,\dots,a_N\}\subset\CA$ is any collection of normalised eigenvectors of $A\in\CCU(\CA)$ with distinct eigenvalues $\{\lambda_1,\dots,\lambda_N\}\subset\FC$, then  $\{\psi_{a_1},\dots,\psi_{a_N}\}\subset\CH_\tau$ is an orthonormal set:
\begin{align}
(\psi_{a_i},\psi_{a_j})_\tau = \delta_{ij}
\end{align}
for all $i,j\in\{1,\dots,N\}$.
\end{proposition}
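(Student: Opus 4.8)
The plan is to prove orthonormality by treating the two claims separately: normalisation ($i=j$) and orthogonality ($i\neq j$). The normalisation part is immediate: by hypothesis each $a_i$ is normalised, $\tau(a_i^*\,a_i)=1$, so by the definition \eqref{eq:tauinnerprod} of the inner product on $\CH_\tau$ we get $(\psi_{a_i},\psi_{a_i})_\tau = \tau(a_i^*\,a_i)=1$.

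For the orthogonality, fix $i\neq j$ and consider $(\psi_{a_i},\psi_{a_j})_\tau = \tau(a_i^*\,a_j)$. The idea is to exploit that $a_j$ is an eigenvector of $A$ with eigenvalue $\lambda_j$, so that $A\,\triangleright\,a_j = \lambda_j\,a_j$, giving $\lambda_j\,(\psi_{a_i},\psi_{a_j})_\tau = \tau\big(a_i^*\,(A\,\triangleright\,a_j)\big) = \big(\psi_{a_i},\pi_\tau(A)\,\psi_{a_j}\big)_\tau$. On the other hand, moving $A$ across the inner product using the $\ast$-representation property $\pi_\tau(A)^\dag = \pi_\tau(A^*)$ established just above, this equals $\big(\pi_\tau(A^*)\,\psi_{a_i},\psi_{a_j}\big)_\tau = \tau\big((A^*\,\triangleright\,a_i)^*\,a_j\big)$. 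Now I would use that $\psi_{a_i}$ is an eigenvector: from Definition~\ref{def:psieigenvalue} and the remark following it, since $a_i$ is an eigenvector of $A$ with eigenvalue $\lambda_i$, applying that remark (with the roles adjusted) or rather directly using $A\,\triangleright\,a_i=\lambda_i\,a_i$ and Lemma~\ref{lem:leftrightmod}, one gets that $A^*\,\triangleright\,a_i$ relates to $a_i$ appropriately. More cleanly: by the relation displayed after Definition~\ref{def:psieigenvalue}, if $a_i$ is an eigenvector of $A$ with eigenvalue $\lambda_i$ then $a_i^*$ is an eigenvector of $A'^*$ with eigenvalue $\overline{\lambda_i}$; combining this with Lemma~\ref{lem:leftrightmod} in the form $(A^*\,\triangleright\,a_i)^* = A'\,\triangleright\,a_i^*$ requires care, so instead I would work on the left: $\pi_\tau(A)\,\psi_{a_i} = \psi_{A\,\triangleright\,a_i} = \lambda_i\,\psi_{a_i}$ shows $\psi_{a_i}$ is an eigenvector of the operator $\pi_\tau(A)$ with eigenvalue $\lambda_i$, hence an eigenvector of $\pi_\tau(A)^\dag = \pi_\tau(A^*)$ with eigenvalue $\overline{\lambda_i}$ — this is the standard fact that for a single operator, an eigenvector is an eigenvector of the adjoint with conjugate eigenvalue, which holds here because $\pi_\tau(A)$ restricted to the span behaves well, or more directly because $\|(\pi_\tau(A)-\lambda_i)\psi_{a_i}\|^2=0$ forces $(\pi_\tau(A)^\dag-\overline{\lambda_i})\psi_{a_i}=0$ by expanding the norm. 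Then $\big(\pi_\tau(A^*)\,\psi_{a_i},\psi_{a_j}\big)_\tau = \overline{\overline{\lambda_i}}\,(\psi_{a_i},\psi_{a_j})_\tau = \lambda_i\,(\psi_{a_i},\psi_{a_j})_\tau$.

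Equating the two computations gives $\lambda_j\,(\psi_{a_i},\psi_{a_j})_\tau = \lambda_i\,(\psi_{a_i},\psi_{a_j})_\tau$, i.e. $(\lambda_i-\lambda_j)\,(\psi_{a_i},\psi_{a_j})_\tau = 0$. Since the eigenvalues are distinct, $\lambda_i - \lambda_j \neq 0$, and therefore $(\psi_{a_i},\psi_{a_j})_\tau = 0$, as claimed. Combining with the normalisation gives $(\psi_{a_i},\psi_{a_j})_\tau = \delta_{ij}$.

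The main obstacle I anticipate is the subtle point that an eigenvector of an operator on a pre-Hilbert space is automatically an eigenvector of its adjoint with the conjugate eigenvalue: this is true for normal operators in the bounded Hilbert-space setting, but here we only have a symmetric-type statement via $\pi_\tau(A)^\dag = \pi_\tau(A^*)$, and $A$ need not be self-adjoint in $\CCU(\CA)$. The safe route, which I would take, is to avoid invoking that fact in full generality and instead argue purely algebraically: write $(\psi_{a_i},\psi_{a_j})_\tau$, hit it with $A$ acting on the right slot to pull out $\lambda_j$, and hit it with $A^*$ acting on the left slot via the $\ast$-representation identity to pull out — using $A^*\,\triangleright\,a_i = \overline{\lambda_i}\,a_i$, which follows because $\psi_{a_i}$ is an eigenvector of $\pi_\tau(A)$ and hence the norm identity $0 = \big((\pi_\tau(A)-\lambda_i\,\sfid)\psi_{a_i},(\pi_\tau(A)-\lambda_i\,\sfid)\psi_{a_i}\big)_\tau$ expands, upon using $\pi_\tau(A)^\dag=\pi_\tau(A^*)$ and $\pi_\tau(A)\psi_{a_i}=\lambda_i\psi_{a_i}$, to $\|(\pi_\tau(A^*)-\overline{\lambda_i}\,\sfid)\psi_{a_i}\|^2 = 0$ in $\CH_\tau$ — the factor $\lambda_i$, and conclude $(\lambda_i-\lambda_j)(\psi_{a_i},\psi_{a_j})_\tau=0$. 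This is essentially the same mechanism as in the proof of Remark~\ref{rem:linindep}, specialised to the tracial GNS Hilbert space, and it sidesteps any normality assumption on $A$.
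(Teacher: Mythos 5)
Your overall strategy is exactly the paper's: compute $\big(\psi_{a_i},\pi_\tau(A)\,\psi_{a_j}\big)_\tau$ in two ways, once by letting $A$ act on the right slot to pull out $\lambda_j$ and once by moving it to the left slot via $\pi_\tau(A)^\dag=\pi_\tau(A^*)$ to pull out $\lambda_i$, then use $\lambda_i\neq\lambda_j$; the $i=j$ case is just the normalisation hypothesis, as you say. You are also right to identify the crux: one needs $\pi_\tau(A)^\dag\,\psi_{a_i}=\overline{\lambda_i}\,\psi_{a_i}$. The paper's proof simply writes $\big(\pi_\tau(A)^\dag\,\psi_{a_i},\psi_{a_j}\big)_\tau=\lambda_i\,(\psi_{a_i},\psi_{a_j})_\tau$ without comment, i.e.\ it makes silently the very assumption you are trying to justify.

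The problem is that your justification of that step fails. Expanding $0=\big\|(\pi_\tau(A)-\lambda_i)\,\psi_{a_i}\big\|^2=\big(\psi_{a_i},(\pi_\tau(A)^\dag-\overline{\lambda_i})\,(\pi_\tau(A)-\lambda_i)\,\psi_{a_i}\big)_\tau$ does \emph{not} yield $\big\|(\pi_\tau(A)^\dag-\overline{\lambda_i})\,\psi_{a_i}\big\|^2=\big(\psi_{a_i},(\pi_\tau(A)-\lambda_i)\,(\pi_\tau(A)^\dag-\overline{\lambda_i})\,\psi_{a_i}\big)_\tau$: swapping the two factors costs the commutator $[\pi_\tau(A),\pi_\tau(A)^\dag]$, i.e.\ it requires $\pi_\tau(A)$ to be normal. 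A direct computation gives $\big\|(\pi_\tau(A)^\dag-\overline{\lambda_i})\,\psi_{a_i}\big\|^2=\|\pi_\tau(A)^\dag\,\psi_{a_i}\|^2-|\lambda_i|^2\,\|\psi_{a_i}\|^2$, which need not vanish; for the nilpotent matrix unit $T=e_{12}$ one has $T\,e_1=0$ but $T^\dag\, e_1=e_2\neq 0$. The ``standard fact'' you invoke, that an eigenvector of $T$ is automatically an eigenvector of $T^\dag$ with conjugate eigenvalue, is a theorem about \emph{normal} operators only, and Remark~\ref{rem:losscyclicity} stresses that generic elements of $\CCU(\CA)$ are not normal. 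Indeed, since by Proposition~\ref{prop:Jordancompalg} the enveloping algebra $\CCU(\MM_n(\FC))=\MM_{n^2}(\FC)$ acts on $\CH_\tau=\FC^{n^2}$ in its fundamental representation, every matrix arises as some $\pi_\tau(A)$, and a matrix such as $\bigl(\begin{smallmatrix}0&1\\0&1\end{smallmatrix}\bigr)$ has distinct eigenvalues $0,1$ with non-orthogonal eigenvectors: the statement genuinely needs $A$ normal (e.g.\ an observable, $A=A^*$, for which $\pi_\tau(A)^\dag=\pi_\tau(A)$ and the eigenvalues are real), and under that hypothesis your two-sided computation closes immediately without the norm gymnastics. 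So your instinct about where the difficulty lies was correct — and it exposes an implicit hypothesis in the paper's own proof — but the patch you propose does not repair it.
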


\begin{proof}
For each $i,j\in\{1,\dots,N\}$ we compute
\begin{align}
\begin{split}
\big(\psi_{a_i},\pi_\tau(A)\,\psi_{a_j}\big)_\tau &= \lambda_j\,(\psi_{a_i},\psi_{a_j})_\tau \\[4pt]
&= \big(\pi_\tau(A)^\dag\,\psi_{a_i},\psi_{a_j}\big)_\tau = \lambda_i\,(\psi_{a_i},\psi_{a_j})_\tau \ .
\end{split}
\end{align}
This implies $(\lambda_i-\lambda_j)\,(\psi_{a_i},\psi_{a_j})_\tau=0$. The result now follows from $\lambda_i\neq\lambda_j$ for $i\neq j$ together with $(\psi_{a_i},\psi_{a_i})_\tau = \tau(a_i^*\,a_i)=1$.
\end{proof}

\subsection*{\underline{\sf GNS-Dirac representation}}

The tracial GNS representation can be used to make contact with the more familiar looking expressions in the physics literature. For this, we use the \emph{Dirac notation} for the inner product \eqref{eq:tauinnerprod} to express it as
\begin{align}
\langle \psi_a|\psi_b\rangle := (\psi_a,\psi_b)_\tau \ ,
\end{align}
and we formally write
\begin{align}
\Tr_{\CH_\tau}\big(|\psi\rangle\langle\psi|\,\pi_\tau(O)\big) := \langle\psi|\,\pi_\tau(O)\,|\psi\rangle = \big(\psi,\pi_\tau(O)\,\psi\big)_\tau = \tau\big(\psi^*\,(O\,\triangleright\,\psi)\big) 
\end{align}
for $O\in\obs(\CCU(\CA))$ and $\psi\in\CH_\tau$, where $\pi_\tau$ is the representation \eqref{eq:taurep} of the enveloping algebra $\CCU(\CA)$ as operators on the pre-Hilbert space $\CH_\tau=\CA/\CJ_\tau$. With an abuse of notation, here we identify $\psi\in\CA$ with its equivalence class $[\psi]\in\CH_\tau$ to simplify the presentation. 

The density matrices of Remark~\ref{rem:mixeddensity} may then be represented as
\begin{align} \label{eq:pitaurhopsi}
\pi_\tau(\widehat\rho_\psi) = \sum_{l=1}^N\,p_l \ |\psi_l\rangle\langle\psi_l| \ .
\end{align}
The corresponding states thereby read
\begin{align}
\omega_\psi(O) = \Tr_{\CH_\tau}\big(\pi_\tau(\widehat\rho_\psi)\,\pi_\tau(O)\big) = \sum_{l=1}^N\,p_l \ \big(\psi_l,\pi_\tau(O)\,\psi_l\big)_\tau = \sum_{l=1}^N\,p_l \ \tau\big(\psi_l^*\,(O\,\triangleright\,\psi_l)\big) \ .
\end{align}
This shows that the states $\omega_\psi$, in this representation, are normal states in the sense of Remark~\ref{rem:normal}.

\section{Dynamics}
\label{sec:CPM}

Let us now take a look at equations of motion in nonassociative quantum mechanics from our algebraic perspective.

\subsection*{\underline{\sf Completely positive maps}}

Time evolution of states in quantum mechanics is captured by the notion of a `completely positive map'. Let $\CA$ be a $\FC$-algebra with enveloping algebra $\CCU(\CA)$.

\begin{definition}
A map $\phi:\mathbb{E}(\CCU(\CA))\longrightarrow \mathbb{E}(\CCU(\CA))$ from states to states is a \emph{positive map}. 
A positive map $\phi$ is a \emph{completely positive map} if it can be consistently extended to a positive map $\phi\otimes\text{id}$ on $\mathbb{E}(\CCU(\CA)\otimes\CCU(\CB))$ for any unital $\ast$-algebra $\CB$.
\end{definition}

\begin{remark}
The tensor product $\CCU(\CA)\otimes\CCU(\CB)$ represents the coupling of a system to its environment. In traditional quantum mechanics, the existence of a completely positive map is non-trivial due to entanglement of states. For example, the transposition of density matrices is a positive map but not a completely positive map. 
\end{remark}

Let $\{A_1,\dots,A_N\}$ be a collection of elements of $\CCU(\CA)$ which are \emph{normalized} in the sense that
\begin{align} \label{eq:Krausssum}
\sum_{k=1}^N\,A_k^\ast \circ A_k = \unit \ .
\end{align}
The elements $A_k$ are the analogs of \emph{Krauss operators} in traditional quantum mechanics: in that case any completely positive map can be expressed in terms of Krauss operators by Choi's Theorem. 
In the nonassociative setting the normalization condition would be hard to satisfy for elements of $\CA$, but not in $\CCU(\CA)$, as we will see below. In particular, we do not impose an additional unital property that Krauss operators are sometimes assumed to have.

Given $\{A_1,\dots,A_N\}$ we define a completely positive map $\omega\longmapsto\widetilde\omega$ by setting
\begin{align} \label{eq:Kraussomega}
\widetilde\omega(O) := \sum_{k=1}^N\,\omega\big(A_k^\ast\circ O \circ A_k\big) \ ,
\end{align}
for any state $\omega\in\mathbb{E}(\CCU(\CA))$ and observable $O\in\obs(\CCU(\CA))$. It is easy to check $\widetilde\omega$ is positive and normalized. This has the meaning of a \emph{quantum channel} that carries quantum information. We may interpret this map in the Heisenberg picture as a map of observables 
\begin{align} \label{eq:obsmap}
\widetilde O = \sum_{k=1}^N\,A_k^\ast\circ O \circ A_k \qquad \text{with} \quad \widetilde\unit = \sum_{k=1}^N\,A_k^\ast \circ A_k = \unit \ .
\end{align}

\begin{example}
We observe that all states of Remark~\ref{rem:mixeddensity} arise as completely positive maps from the tracial state $\tau$: for an observable $O\in\obs(\CCU(\CA))$, setting $A_l=\sqrt{p_l} \ \widehat\psi_l$ yields
\begin{align}
\widetilde\tau(O) = \sum_{l=1}^N \, p_l\,\tau\big((\widehat\psi_l^*\circ O\circ\widehat\psi_l)\,\triangleright\,\unit\big) = \sum_{l=1}^N\,p_l\,\tau\big(\psi_l^*\,(O\,\triangleright\,\psi_l)\big) = \omega_{\psi}(O) \ .
\end{align}
\end{example}

\begin{example} \label{ex:CPTcompose}
We may also compose the states of Remark~\ref{rem:mixeddensity} with any other completely positive map. We illustrate this for the state
\begin{align}
\omega_\psi(O) = \tau\big(\psi^\ast\,(O\,\triangleright\, \psi)\big) \ .
\end{align} 
By repeatedly applying the cyclicity properties of the trace $\tau:\CA\longrightarrow\FC$, and using Lemma~\ref{lem:leftrightmod}, we can compute the new state \smash{$\widetilde\omega_\psi$} as
\begin{align}
\begin{split}
\widetilde\omega_\psi(O) &= \sum_{k=1}^N\,\tau\big(\psi^\ast\,(A_k^\ast\,\triangleright\,(O\,\triangleright\,(A_k\,\triangleright\,\psi)))\big) \\[4pt]
&= \sum_{k=1}^N\,\tau\big((A_k'^*\,\triangleright\,\psi^*)\,(O\,\triangleright\,(A_k\,\triangleright\,\psi))\big) = \sum_{k=1}^N\,\tau\big((A_k\,\triangleright\,\psi)^*\,(O\,\triangleright\,(A_k\,\triangleright\,\psi))\big) \ .
\end{split}
\end{align}

Restricting this state to the vector space embedding \smash{$\widehat\CA\subset\CCU(\CA)$} shows that the density matrix $\rho_\psi=\psi\,\psi^\ast$ transforms to the new density matrix
\begin{align} \label{eq:Kraussrho}
 \widetilde\rho_\psi = \sum_{k=1}^N\,(A_k\,\triangleright\,\psi)\,(A_k\,\triangleright\,\psi)^* \ ,
\end{align}
which we interpret as a map in the Schr\"odinger picture.
\end{example}

\subsection*{\underline{\sf Unitary time evolution}}

Fix an observable $H=H^*$ in $\obs(\CCU(\CA))$, which we identify as the Hamiltonian of a given quantum system. For a time parameter $t\in\FR$, the exponentials
\begin{align} \label{eq:unitaries}
U_t = \exp_\circ\big(-\tfrac\ii\hbar\,t\,H\big)
\end{align}
are defined by computing powers in the Taylor series of the exponential using the composition product. We assume that $H$ itself is time-independent (otherwise $U_t$ should be defined using a path-ordered exponential). We further assume that $U_t$ are elements of $\CCU(\CA)$ (for $C^*$-algebras this would be automatically guaranteed by standard functional calculus). 

The one-parameter family \eqref{eq:unitaries} forms a faithful representation of the additive group $\FR$ in $\sEnd(\CCU(\CA))$:
\begin{align}
U_{t+t'} = U_t\circ U_{t'}
\end{align}
for $t,t'\in\FR$. In particular, this implies that $U_t$ are unitaries, i.e. they
satisfy
\begin{align}
U_t^\ast\circ U_t = \unit = U_t\circ U_t^* \ ,
\end{align}
where $U_t^*=U_{-t}$ and $U_0=\unit$.

The unitaries $U_t$ provide a quantum channel with completely positive map $\omega\longmapsto\omega_t$. That is, given a state $\omega\in\mathbb{E}(\CCU(\CA))$, let
\begin{align}
\omega_t(O) := \omega(U_t^\ast\circ O \circ U_t) \ ,
\end{align}
with the initial condition $\omega_0=\omega$. We assume that the functions $t\longmapsto\omega_t(O)$ are continuous for every $O\in\obs(\CCU(\CA))$.

The infinitesimal generator of this completely positive map is then given by
\begin{align}
\frac{\dd\,\omega_t(O)}{\dd t}:= \lim_{t'\to0} \, \frac{\omega_{t+t'}(O)-\omega_t(O)}{t'} = \frac\ii\hbar \, \omega_t\big([H,O]_\circ \big) \ ,
\end{align}
provided the limit exists for every $O\in\obs(\CCU(\CA))$.
We interpret this equation as the Heisenberg equation of motion
\begin{align} \label{eq:dOdt}
\frac{\dd O}{\dd t} = \frac\ii\hbar \, [H,O]_\circ
\end{align}
in nonassociative quantum mechanics. Note that the right-hand side is a derivation (in $O$) of the enveloping algebra $\CCU(\CA)$, and hence is compatible with the time derivative on the left-hand side.

For the transformed density matrix
\begin{align}
\rho_\psi^{t} = (U_t\,\triangleright\,\psi)\,(U_t\,\triangleright\,\psi)^* \ ,
\end{align}
with the initial condition $\rho^0_\psi = \rho_{\psi}$,  we need to endow the algebra $\CA$ with the additional structure of a metric space. This will be the case for the explicit examples we consider later on. We assume that the maps $t\longmapsto\rho_\psi^t$ are continuous for every $\psi\in\CA$. 

We then find
\begin{align} \label{eq:drhodt}
\ii\,\hbar\,\frac{\dd\rho_\psi^{t}}{\dd t} := \ii\,\hbar\, \lim_{t'\to0} \, \frac{\rho_\psi^{t+t'} - \rho_\psi^t}{t'} = (H\,\triangleright\,\psi)\,\psi^* - \psi\,(H'\,\triangleright\,\psi^*) \ ,
\end{align}
provided the limit exists for every $\psi\in\CA$.
We interpret this equation as following from the nonassociative Schr\"odinger equation
\begin{align} \label{eq:Schrodinger}
\ii\,\hbar\,\frac{\dd\psi}{\dd t} = H\,\triangleright\, \psi 
\end{align}
involving the nonassociative product of the algebra $\CA$. In particular, this demonstrates that the eigenvalue problem should be studied for the element $\psi\in\CA$ and \emph{not} its associated density matrix $\rho_\psi = \psi\,\psi^*$, contrary to the treatment of~\cite{Mylonas:2013jha}.

\subsection*{\underline{\sf Master equation}}

Let us now extend this description of time evolution by deriving the analog of the Lindblad master equation for an open system in nonassociative quantum mechanics. The technical details are rather involved but standard, so here we only provide a heuristic sketch of the derivation. 

We start from the completely positive map $\omega\longmapsto\widetilde\omega$ given by
\begin{align} \label{eq:omegaprimeO}
\widetilde\omega(O) = \sum_{k=0}^N\, \omega(A_k^*\circ O\circ A_k) \ ,
\end{align}
and we set
\begin{align}
A_0=\unit+L_0\,\delta t-\tfrac\ii\hbar\,H\,\delta t \qquad \text{and} \qquad A_k=L_k\,\sqrt{\delta t} \ ,
\end{align}
for $k=1,\dots,N$. Here $L_0=L_0^*$ and $H=H^\star$ are observables in $\obs(\CCU(\CA)$), while $L_1,\dots,L_N$ are \emph{jump operators} in $\CCU(\CA)$ encoding dissipation into the dynamics, and $\delta t>0$ is an infinitesimal variation of the time parameter $t\in\FR$. Inserting this into \eqref{eq:omegaprimeO} yields
\begin{align}
\frac{\dd\,\omega_t(O)}{\dd t} = \omega\Big(\{L_0,O\}_\circ + \tfrac\ii\hbar\,[H,O]_\circ + \mbox{$\sum\limits_{k=1}^N$}\,L_k^*\circ O\circ L_k\Big) \ .
\end{align}

Imposing the condition 
\begin{align}
\sum_{k=0}^N\,A_k^*\circ A_k = \unit
\end{align}
then enables us to solve for $L_0$ in terms of $L_k$ for $k=1,\dots,N$ as
\begin{align}
L_0 = -\frac12\,\sum_{k=1}^N\,L_k^*\circ L_k \ .
\end{align}
This yields
\begin{align}
\frac{\dd\,\omega_t(O)}{\dd t} = \frac\ii\hbar\,\omega\big([H,O]_\circ\big) + \sum_{k=1}^N\,\omega\big(L_k^*\circ O\circ L_k-\tfrac12\,\{L_k^*\circ L_k,O\}_\circ\big) \ .
\end{align}
We may interpret this evolution equation in the Heisenberg picture as the equation of motion
\begin{align} \label{eq:mastereq}
\frac{\dd O}{\dd t} = \frac\ii\hbar\,[H,O]_\circ + \sum_{k=1}^N\, L_k^*\circ O\circ L_k-\frac12\,\{L_k^*\circ L_k,O\}_\circ
\end{align}
for each quantum observable $O\in\obs(\CCU(\CA))$.

\subsection*{\underline{\sf Nonassociative dynamics}}

To capture a more nonassociative perspective on dynamics which relies less on the full associative enveloping algebra, along the lines of \S\ref{sec:trace}, we suppose that $\CA$ is endowed with a trace $\tau$ in the sense of  Definition~\ref{def:trace}. We take the Krauss operators to be elements $a_k\in\CA$ for $k=1,\dots,N$ and use them to construct completely positive dynamics for the positive observable $b^*\,b$ with $b\in\CA$. Given an element $\psi\in\CA$,  there are then two possible choices for  the completely positive dynamics of the tracial state of $\CA$, reflecting the fact that nonassociativity breaks operator-state duality.

Consider first
\begin{align}
\begin{split}
0\leqslant\mu_\psi(b^*\,b) :\!\!&= \sum_{k=1}^N\,\tau\big(((\psi^*\,a_k^*)\,b^*)\,(b\,(a_k\,\psi))\big) \\[4pt]
&= \sum_{k=1}^N\,\tau\big((\psi^*\,a_k^*)\,(b^*\,(b\,(a_k\,\psi)))\big) = \sum_{k=1}^N\,\tau\big((\widehat a_k\,\triangleright\,\psi)^*\,((\,\widehat b^*\circ \widehat b\,)\,\triangleright\,(\widehat a_k\triangleright\,\psi))\big) \ .
\end{split}
\end{align}
This choice is just a special case of the construction from Example~\ref{ex:CPTcompose} for $O=\widehat b^*\circ\widehat b\in\obs(\CCU(\CA))$ and $A_k=\widehat a_k\in\CCU(\CA)$.

Alternatively, by using a different placement of parantheses we can consider the \emph{dual state} with
\begin{align}
\begin{split}
0\leqslant \mu_\psi^{\textrm{\tiny$\vee$}}(b^*\,b) :\!\!&= \sum_{k=1}^N\, \tau\big((\psi^*\,(a_k^*\,b^*))\,((b\,a_k)\,\psi)\big) \\[4pt]
&= \sum_{k=1}^N\,\tau\big((b\,a_k)\,(\psi\,(\psi^*\,(a_k^*\,b^*))\big) \\[4pt]
&= \sum_{k=1}^N\,\tau\big(b\,(a_k\,(\psi\,(\psi^*\,(a_k^*\,b^*)))\big) = \sum_{k=1}^N\,\tau\big(\,\widehat b\circ\widehat a_k\circ(\widehat\psi\circ\widehat\psi^*)\circ\widehat a_k^*\circ\widehat b^*\big) \ .
\end{split}
\end{align}
This equation tells us two things. Firstly, it identifies the quantity $\widehat\rho_{\widehat\psi}:=\widehat\psi\circ\widehat\psi^*$ as the analogue of a positive semi-definite density matrix, which in the dual picture lives in the enveloping algebra $\CCU(\CA)$ rather than in $\CA$. Secondly, the constitutents $\widehat b$ and $\widehat b^*$ of the observable $O=\widehat b^*\circ\widehat b$ are \emph{split} in the dual state. 

Split operators also appear in alternative proposals for nonassociative dynamics that replace commutators by associators in the Heisenberg equation of motion, leading symbolically to expressions such as
\begin{align} \label{eq:H1H2}
\frac{\dd O}{\dd t} = [H_1,O,H_2] := (H_1\,O)\,H_2 - H_1\,(O\,H_2)
\end{align}
involving a pair of Hamiltonians $(H_1,H_2)$~\cite{Biedenharn:1981pa,Liebmann19}. This expression can be corrected in order to preserve self-adjointness, and hence reality of probabilities. The corresponding dual Schr\"odinger-Liouville equation is trace-preserving, as the associator vanishes upon acting with a $3$-cyclic trace, thus leading to normalized probabilities. But there does not seem to be a natural notion of positivity, in constrast to our formalism, and hence these theories have no probabilistic interpretation. Dynamics such as \eqref{eq:H1H2} should be regarded as a quantum version of Nambu's \emph{generalized dynamics}~\cite{Nambu:1973qe,Mylonas:2013jha}, but not as nonassociative quantum mechanics.

\subsection*{\underline{\sf GNS-Dirac representation}}

Let us now collect the main dynamical formulas of this section in the tracial GNS representation of \S\ref{sec:trace} in terms of operators on the pre-Hilbert space $\CH_\tau = \CA/\CJ_\tau$. In the GNS-Dirac notation, the normalization condition \eqref{eq:Krausssum} for Krauss operators becomes
\begin{align}
\sum_{k=1}^N\,\pi_\tau(A_k)^\dag\,\pi_\tau(A_k) = \pi_\tau(\unit) = \id_{\CH_\tau} \ ,
\end{align}
while the transformation \eqref{eq:Kraussrho} of the density matrix $\pi_\tau(\rho_\psi)$ given by \eqref{eq:pitaurhopsi} reads
\begin{align}
\pi_\tau(\,\widehat{\widetilde\rho}_\psi) = \sum_{k=1}^N\,\pi_\tau(A_k) \, \pi_\tau(\widehat\rho_\psi)\,\pi_\tau(A_k)^\dag \ .
\end{align}
Similarly, the map of observables $O\in\obs(\CCU(\CA))$ given by \eqref{eq:obsmap} is represented as
\begin{align}
\pi_\tau(\widetilde O) = \sum_{k=1}^N\,\pi_\tau(A_k)^\dag \, \pi_\tau(O) \, \pi_\tau(A_k) \ .
\end{align}
These are the more familiar looking formulas.

The time evolution equation \eqref{eq:drhodt} now assumes the familiar form of a Schr\"odinger-Liouville equation
\begin{align}
\frac{\dd\,\pi_\tau(\widehat\rho_\psi^{\,t})}{\dd t} = \frac\ii\hbar \, \big[\pi_\tau(\widehat\rho_\psi),\pi_\tau(H)\big] \ ,
\end{align}
whereas \eqref{eq:dOdt} takes the standard form of a Heisenberg equation of motion
\begin{align}
\frac{\dd\,\pi_\tau(O)}{\dd t} = \frac\ii\hbar\,\big[\pi_\tau(H),\pi_\tau(O)\big] \ .
\end{align}
The Lindblad equations \eqref{eq:mastereq} simplify in a similar way to
\begin{align}
\frac{\dd \,\pi_\tau(O)}{\dd t} = \frac\ii\hbar\,\big[\pi_\tau(H),\pi_\tau(O)\big] + \sum_{k=1}^N\, \pi_\tau(L_k)^\dag\, \pi_\tau(O)\, \pi_\tau(L_k)-\frac12\,\big\{\pi_\tau(L_k)^\dag\, \pi_\tau(L_k),\pi_\tau(O)\big\} \ .
\end{align}

\section{Jordan algebras}
\label{sec:Jordan}

As our first and simplest explicit example, let $\CA$ be a noncommutative associative unital $*$-algebra over $\FC$. On the underlying vector space of $\CA$ we define a commutative nonassociative product by
\begin{align} \label{eq:Jordanprod}
a \, b := \tfrac12\,(a\cdot b+b\cdot a) \ ,
\end{align}
for $a,b\in\CA$, where the dot denotes the original multiplication on $\CA$. Nonassociativity is easily seen through the computation
\begin{align} \label{eq:JordanNA}
a\,(b\,c) - (a\,b)\, c = \tfrac14\,[[a,c]_\cdot,b]_\cdot \ ,
\end{align}
where $[a,c]_\cdot = a\cdot c - c\cdot a$ is the commutator in the original algebra; this is non-vanishing in general.
This new product is called the \emph{Jordan product}, and it makes $\CA$ into a commutative \emph{Jordan algebra}; see e.g.~\cite{Szabo:2019hhg} for a review and references. 

Here we only consider the case where $\CA=\MM_n(\FC)$ is the finite-dimensional algebra of $n{\times}n$ complex-valued matrices. The unit element $\unit$ is the $n{\times}n$ identity matrix, while the natural $*$-structure is complex conjugation followed by matrix transposition ${}^\top$:
\begin{align}
a^* := a^\dag = \overline a{}^\top \ .
\end{align}
The matrix algebra $\MM_n(\FC)$ is generated by $\unit$ and the $n{\times}n$ matrix units $e_{ij}$, with $1\leqslant i,j\leqslant n$, i.e.~$e_{ij}$ is the $n{\times}n$ matrix whose $(i,j)$-th entry is equal to $1$ and has zeroes for all other entries. They obey the relations $e_{ij}\cdot e_{kl} = \delta_{jk} \, e_{il}$, where the dot denotes ordinary matrix multiplication. Hence the relations of the Jordan algebra are
\begin{align}
e_{ij}\,e_{kl} = \tfrac12\, (\delta_{jk}\,e_{il} + \delta_{il}\,e_{kj}) \qquad \mbox{with} \quad \sum_{i=1}^n\,e_{ii} = \unit \ .
\end{align}

\subsection*{\underline{\sf Enveloping algebra}}

Let us explicitly describe the enveloping algebra $\CCU(\MM_n(\FC))$, acting in the regular birepresentation, in this case. By Example~\ref{lem:bimodcomp2}, it suffices to consider, say, the left module structure on $\CA$. Under the isomorphism $\mathbbm{M}_n(\FC)\simeq\FC^n\otimes\FC^n$, the left module structure on $\MM_n(\FC)$ defines a linear representation $\varPi:\mathbbm{M}_n(\FC)\longrightarrow\mathbbm{M}_n(\FC)\otimes\mathbbm{M}_n(\FC)$ given by
\begin{align}
\varPi(a)\cdot x := \widehat a\,\triangleright\, x = a\,x \ ,
\end{align}
for $a,x\in\MM_n(\FC)$. Explicitly
\begin{align}
\varPi(a) = \tfrac12\,\big(a\otimes\unit + \unit\otimes a^\top\big) \ .
\end{align}
Then the composition product is given by matrix multiplication:
\begin{align} \label{eq:Jordancomp}
\widehat a\circ \widehat b := \varPi(a) \cdot \varPi(b) = \tfrac14\,\big((a\cdot b)\otimes\unit + a\otimes b^\top + b\otimes a^\top + \unit\otimes(b\cdot a)^\top\big) \ ,
\end{align}
where we note the ordering in the last term. The unit of $\CCU(\MM_n(\FC))$ is $\varPi(\unit)=\unit\otimes\unit$ and the $*$-involution is the tensor product ${}^* = {}^\dag\otimes{}^\dag$. With this description we can straightforwardly establish

\begin{proposition} \label{prop:Jordancompalg}
The enveloping algebra of  the Jordan algebra $\MM_n(\FC)$ is the matrix algebra
\begin{align}
\CCU\big(\mathbbm{M}_n(\FC)\big) = \mathbbm{M}_{n^2}(\FC) \ .
\end{align}
\end{proposition}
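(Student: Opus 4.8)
The plan is to work with the explicit model already set up: by Example~\ref{lem:bimodcomp2}, since the Jordan product \eqref{eq:Jordanprod} is commutative, $\CCU(\MM_n(\FC))$ is (the isomorphic image of) the unital subalgebra $\mathcal{U}\subseteq\sEnd(\MM_n(\FC))\cong\MM_n(\FC)\otimes\MM_n(\FC)=\MM_{n^2}(\FC)$ generated by the Jordan left multiplications $L_a:=\varPi(a)=\tfrac12(a\otimes\unit+\unit\otimes a^\top)$, $a\in\MM_n(\FC)$. Since $\mathcal{U}$ sits inside $\MM_{n^2}(\FC)$ by construction, only the reverse inclusion $\MM_{n^2}(\FC)\subseteq\mathcal{U}$ needs to be proved. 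I will write $\lambda_a$ and $\rho_a$ for left and right multiplication by $a$ in the \emph{associative} algebra $\MM_n(\FC)$, so that $L_a=\tfrac12(\lambda_a+\rho_a)$ with the familiar relations $\lambda_a\,\rho_b=\rho_b\,\lambda_a$, $\lambda_a\,\lambda_b=\lambda_{a\cdot b}$ and $\rho_a\,\rho_b=\rho_{b\cdot a}$. The aim is to show that $\mathcal{U}$ already contains every $\lambda_a$ and every $\rho_a$, hence the full associative multiplication algebra of $\MM_n(\FC)$, which is all of $\MM_{n^2}(\FC)$.

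The crux is a computation of the commutator of two Jordan multiplications inside the associative algebra $\CCU(\MM_n(\FC))$. Because the mixed terms cancel ($\lambda_a\,\rho_b=\rho_b\,\lambda_a$), one finds
\begin{align}
[L_a,L_b]_\circ = L_a\circ L_b - L_b\circ L_a = \tfrac14\big(\lambda_{[a,b]}-\rho_{[a,b]}\big) \ ,
\end{align}
where $[a,b]=a\cdot b-b\cdot a$ is the associative commutator, so $\lambda_{[a,b]}-\rho_{[a,b]}=\ad_{[a,b]}$ is the inner derivation $x\mapsto[a,b]\cdot x - x\cdot[a,b]$. Thus $\mathcal{U}$ contains $\ad_c$ for every $c$ in the span of the commutators. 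For $n\geqslant2$ that span is all of the traceless matrices $\asl_n(\FC)$; since $\ad_\unit=0$ and $c\longmapsto\ad_c$ is linear, it follows that $\ad_c\in\mathcal{U}$ for \emph{all} $c\in\MM_n(\FC)$. Combining this with $\lambda_c+\rho_c=2\,L_c\in\mathcal{U}$ and dividing by $2$ (legitimate since $\FC$ has characteristic zero) shows $\lambda_c=L_c+\tfrac12\,\ad_c\in\mathcal{U}$ and $\rho_c=L_c-\tfrac12\,\ad_c\in\mathcal{U}$ for every $c$.

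Finally, the subalgebra of $\sEnd(\MM_n(\FC))$ generated by all $\lambda_c$ and $\rho_c$ is $\MM_{n^2}(\FC)$: explicitly $\lambda_{e_{ij}}\circ\rho_{e_{kl}}$ sends $e_{pq}\longmapsto e_{ij}\cdot e_{pq}\cdot e_{kl}=\delta_{jp}\,\delta_{qk}\,e_{il}$, so it is the elementary matrix $E_{(i,l),(j,k)}$ of $\MM_{n^2}(\FC)$, and letting $(i,l)$ and $(j,k)$ run independently over all index pairs produces every matrix unit (equivalently, the multiplication algebra of the central simple algebra $\MM_n(\FC)$ is its full endomorphism algebra, by the double centralizer theorem). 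Hence $\mathcal{U}=\MM_{n^2}(\FC)$. The degenerate case $n=1$ is handled separately and trivially: $\MM_1(\FC)=\FC$ is associative, so $\CCU(\FC)\cong\FC=\MM_1(\FC)$ by Example~\ref{lem:bimodcomp}. The steps I expect to be routine are the facts that commutators span $\asl_n(\FC)$ and that the associative multiplication algebra of a central simple algebra is all of $\sEnd$; the argument is then uniform in $n\geqslant2$, so no exceptional behaviour (for instance at $n=2$, where $\MM_2(\FC)^+$ is a spin factor) intervenes. The non‑obvious point, and the one I would flag as the heart of the proof, is the commutator identity above: passing to commutators of Jordan multiplications recovers the adjoint action and thereby collapses the problem onto the already‑understood associative case of Example~\ref{lem:bimodcomp}.
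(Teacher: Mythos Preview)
Your proof is correct and takes a genuinely different route from the paper's. The paper argues by an explicit iterative construction: starting from the generators $\varPi_{ij}=\tfrac12(e_{ij}\otimes\unit+\unit\otimes e_{ji})$, it first builds the diagonal elements $E_{ii}=e_{ii}\otimes e_{ii}=2\,\varPi_{ii}\cdot\varPi_{ii}-\varPi_{ii}$, then the off-diagonal $E_{ij}=e_{ij}\otimes e_{ij}$ as $4\,E_{ii}\cdot\varPi_{ij}\cdot\varPi_{ji}\cdot E_{jj}$, and finally all remaining matrix units $e_{ij}\otimes e_{kl}$ from products of these with further $\varPi$'s. Your argument instead isolates the structural mechanism: the commutator identity $[L_a,L_b]_\circ=\tfrac14\,\ad_{[a,b]}$ recovers the adjoint action of $\MM_n(\FC)$ inside $\mathcal{U}$, and combining $\ad_c$ with $2L_c=\lambda_c+\rho_c$ separates the associative left and right multiplications, reducing the question to the associative multiplication algebra already handled in Example~\ref{lem:bimodcomp}. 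The paper's approach is entirely self-contained and gives explicit formulas for the matrix units in terms of Jordan multiplications; yours is cleaner and explains \emph{why} the Jordan enveloping algebra coincides with the associative one---namely that commutators of Jordan multiplications reconstruct the Lie structure, hence the full associative structure---at the cost of invoking the standard fact that commutators span $\asl_n(\FC)$.
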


\begin{proof}
We use the isomorphism $\MM_{n^2}(\FC)\simeq\MM_n(\FC)\otimes\MM_n(\FC)$ of vector spaces. The generators of the enveloping algebra $\CCU(\MM_n(\FC))$ in this basis are given by the unit $\unit\otimes\unit$ together with
\begin{align}
\varPi_{ij} := \varPi(e_{ij}) = \tfrac12\,(e_{ij}\otimes\unit + \unit\otimes e_{ji}\big) \ ,
\end{align}
for $1\leqslant i,j\leqslant n$. We show that these generate all tensor products $e_{ij}\otimes e_{kl}$. 

This can be done iteratively starting from
\begin{align}
E_{ij} := e_{ij}\otimes e_{ij}
\end{align}
for $1\leqslant i,j\leqslant n$. 
These are given in terms of the generators $\varPi_{ij}$ through
\begin{align}
E_{ii} = 2\, \varPi_{ii}\cdot\varPi_{ii} - \varPi_{ii} \qquad \text{and} \qquad E_{ij} = 4\,E_{ii} \cdot \varPi_{ij}\cdot \varPi_{ji}\cdot E_{jj} \ ,
\end{align}
where here and below we assume that $i\neq j\neq k\neq l$. The remaining 12 possible index structures can now all be expressed entirely in terms of these combinations; for example
\begin{align}
e_{ij}\otimes e_{ii} = 2\,E_{ii}\cdot\varPi_{ij} \qquad \text{and} \qquad e_{ij}\otimes e_{kl} = 4\,\varPi_{ik}\cdot E_{kj}\cdot \varPi_{lj} \ ,
\end{align}
and so on.
\end{proof}

\subsection*{\underline{\sf Positive elements}}

The natural tracial state over the Jordan algebra $\MM_n(\FC)$ is provided by the normalized matrix trace:
\begin{align}
\tau(a) := \tfrac1n \, \Tr(a) \ .
\end{align}
By cyclicity of the matrix trace it follows that $\Tr(a\,b)=\Tr(a\cdot b)$. From this it follows trivially that $\tau$ is positive and $2$-cyclic with respect to the Jordan product. One also easily checks that $\tau$ is $3$-cyclic, as required.

Positive elements $a$ with respect to the trace $\tau$ are given by the elements
\begin{align}
a^*\,a = \tfrac12\,\big(a^\dag\cdot a + a\cdot a^\dag\big) =: c \ ,
\end{align}
where $c$ is a positive semi-definite matrix. It follows that $a$ can be replaced by a unique square root $\sqrt{c}$. In particular, for the density matrices
\begin{align}
\rho_\psi = \psi\,\psi^* = \tfrac12\,\big(\psi\cdot\psi^\dag + \psi^\dag\cdot\psi\big) =: \varphi\cdot\varphi \ .
\end{align}
This enables one to redefine the corresponding state $\omega_\psi$ to
\begin{align}
\omega_\varphi(A) = \tau\big(\varphi\,(A\,\triangleright\,\varphi)\big) \ ,
\end{align}
in terms of a positive semi-definite matrix $\varphi$. From this perspective, pure states correspond to projectors, that is, $\varphi^2 = \varphi$.

\subsection*{\underline{\sf Tracial GNS representation}}

For the tracial state $\tau$ the ideal of zero norm elements is trivial:
\begin{align}
\CJ_\tau = \big\{a\in\MM_n(\FC) \ \big| \ \Tr(a^\dag\,a)=0\big\} = \{0\} \ ,
\end{align}
and hence the tracial GNS Hilbert space can be identified with $\MM_n(\FC)\simeq\FC^{n^2}$ as a vector space:
\begin{align}
\CH_\tau = \FC^{n^2} \ ,
\end{align}
with the standard inner product $(\,\cdot\,,\,\cdot\,)_\tau$.
By Proposition~\ref{prop:Jordancompalg}, the tracial GNS representation is the fundamental representation of the enveloping algebra $\CCU(\MM_n(\FC)) = \MM_{n^2}(\FC)$:
\begin{align}
\pi_\tau(A)\,\psi = A\cdot \psi \ ,
\end{align}
for $A\in\MM_{n^2}(\FC)$ and $\psi\in\FC^{n^2}$.

\subsection*{\underline{\sf Eigenstates}}

Let us now look at eigenstates in the Jordan algebra, which can be characterised according to
\begin{proposition} \label{prop:eigenJordan}
For $a\in\MM_n(\FC)$, $\lambda\in\FC$ and a positive semi-definite $n{\times}n$ matrix $\varphi$ of rank $r$, the eigenvalue equation 
\begin{align}
a\,\varphi =  \lambda\,\varphi
\end{align}
is solved by the spectral decomposition
\begin{align} \label{eq:spectralrep}
\varphi = \sum_{i=1}^N \, \sqrt{p_i} \ \phi_i\otimes\phi_i^\dag \ ,
\end{align}
for some $N\in\mathbbm{N}$, where $p_i>0$ with $\frac1r\,\sum_{i=1}^N\,p_i=1$, and $\phi_i\in\FC^n$ are ordinary eigenvectors of the matrix $a$ with fixed eigenvalue $\lambda$, i.e. $a \cdot \phi_i = \lambda\, \phi_i$. 
\end{proposition}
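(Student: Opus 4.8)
The plan is to characterise the positive semi-definite solutions of the Jordan eigenvalue equation by reducing it to an ordinary linear-algebra statement on $\mathrm{ran}(\varphi)$. First I would rewrite $a\,\varphi=\lambda\,\varphi$, in which $a\,\varphi=\tfrac12(a\cdot\varphi+\varphi\cdot a)$ is the Jordan product \eqref{eq:Jordanprod}, as the matrix identity $a\cdot\varphi+\varphi\cdot a=2\lambda\,\varphi$. Since $\varphi$ is positive semi-definite of rank $r$ I would diagonalise it, $\varphi=\sum_{i=1}^{r}\mu_i\,\phi_i\otimes\phi_i^\dag$ with $\mu_i>0$ and $\{\phi_i\}_{i=1}^{r}$ an orthonormal basis of $W:=\mathrm{ran}(\varphi)$, completed to an orthonormal basis $\{\phi_i\}_{i=1}^{n}$ of $\FC^n$ (so that $\mu_i=0$ for $i>r$).

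Taking matrix elements of $a\cdot\varphi+\varphi\cdot a=2\lambda\,\varphi$ in this basis yields the scalar relations $(\mu_i+\mu_j)\,a_{ij}=2\lambda\,\mu_i\,\delta_{ij}$. The diagonal ones force $a_{ii}=\lambda$ for $i\leqslant r$, while for $i\neq j$ the coefficient $\mu_i+\mu_j$ is strictly positive whenever $i\leqslant r$ or $j\leqslant r$ — this is exactly where positive semi-definiteness of $\varphi$, hence strict positivity of its nonzero eigenvalues, enters — so $a_{ij}=0$ unless both $i,j>r$. Hence in this basis $a$ is block-diagonal, $a=\begin{pmatrix}\lambda\,\unit_r & 0\\ 0 & b\end{pmatrix}$ for some $b\in\MM_{n-r}(\FC)$; equivalently, $W$ is a reducing subspace for $a$ on which $a$ acts as $\lambda\,\id$. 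In particular each $\phi_i$ with $i\leqslant r$ is an ordinary eigenvector of $a$ with eigenvalue $\lambda$, which is the asserted spectral form of $\varphi$; relabelling $\sqrt{p_i}:=\mu_i$ and imposing $\tfrac1r\sum_i p_i=1$ fixes the normalisation, and one may pass freely to any ensemble decomposition $\varphi=\sum_{i=1}^{N}\sqrt{p_i}\,\phi_i\otimes\phi_i^\dag$ with $N\geqslant r$ whose vectors still span $W$.

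For the converse I would verify directly that such a decomposition solves the equation: from $a\cdot\phi_i=\lambda\,\phi_i$ one gets $a\cdot\varphi=\lambda\,\varphi$, and since $W$ reduces $a$ one also has $a^\dag\cdot\phi_i=\bar\lambda\,\phi_i$, whence $\varphi\cdot a=\sum_i\sqrt{p_i}\,\phi_i\otimes(a^\dag\phi_i)^\dag=\lambda\,\varphi$; adding the two identities and halving recovers $a\,\varphi=\lambda\,\varphi$. When $a$ is normal — in particular when $a$ is an observable, so $a^\dag=a$ — the reducing condition holds automatically, so the bare hypothesis ``$\phi_i$ an eigenvector of $a$ with eigenvalue $\lambda$'' already suffices.

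The main obstacle will be the second step: extracting from the single Jordan relation $a\cdot\varphi+\varphi\cdot a=2\lambda\,\varphi$ the full statement that $a$ and $\varphi$ are \emph{simultaneously} block-diagonalised, i.e. that $\mathrm{ran}(\varphi)$ is not merely contained in an eigenspace of $a$ but is in fact a \emph{reducing} subspace of $a$. This is precisely where positivity of $\varphi$ is indispensable, and it is also the point that shows that for non-normal $a$ an arbitrary $\lambda$-eigenvector of $a$ need not generate a solution; the remaining steps are routine bookkeeping.
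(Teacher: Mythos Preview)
Your argument is correct, but it proceeds differently from the paper. The paper rewrites the Jordan eigenvalue equation as the anticommutation relation $(a-\lambda\unit)\cdot\varphi=-\varphi\cdot(a-\lambda\unit)$, applies both sides to an eigenvector $\phi_i$ of $\varphi$ with eigenvalue $\sqrt{p_i}>0$, and observes that $(a-\lambda\unit)\cdot\phi_i$ would then be an eigenvector of $\varphi$ with eigenvalue $-\sqrt{p_i}<0$; positive semi-definiteness of $\varphi$ forces this vector to vanish, so $a\cdot\phi_i=\lambda\,\phi_i$. You instead pass to the eigenbasis of $\varphi$ and read off the scalar relations $(\mu_i+\mu_j)\,a_{ij}=2\lambda\,\mu_i\,\delta_{ij}$, using strict positivity of $\mu_i+\mu_j$ whenever at least one index lies in the range of $\varphi$ to kill the off-diagonal entries.

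Your route is more hands-on but also yields slightly more: you obtain the full block-diagonal form of $a$ and hence that $W=\mathrm{ran}(\varphi)$ is a \emph{reducing} subspace, not merely an invariant one. This extra information is exactly what you then use to discuss the converse and to flag the subtlety for non-normal $a$, a point the paper's proof does not touch. The paper's argument is shorter and more conceptual (a one-line spectral contradiction), while yours makes the mechanism---and the precise role of positivity---completely explicit.
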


\begin{proof}
The eigenvalue equation reads
\begin{align}
\tfrac12\,(a\cdot\varphi + \varphi\cdot a) = \lambda\,\varphi \ ,
\end{align}
which implies that the matrices $a-\lambda\,\unit$ and $\varphi$ anticommute. 
The matrix $\varphi$ has a spectral representation 
\begin{align}
\varphi = \sum_{i=1}^N\,\alpha_i \ \phi_i\otimes\phi_i^\dag \ ,
\end{align}
for some $N\in\mathbbm{N}$, with $\alpha_i>0$ and orthonormal eigenvectors $\phi_i\in\FC^n$, i.e. $\varphi \cdot \phi_i = \alpha_i \, \phi_i$. With the normalization $\tau(\varphi^2) = 1$, this implies $\alpha_i=\sqrt{p_i}$ with $p_i>0$ and $\frac1r\,\sum_{i=1}^N\,p_i=1$.
Then
\begin{align}
\varphi\cdot(a-\lambda\,\unit)\cdot\phi_i = -(a-\lambda\,\unit)\cdot\varphi\cdot\phi_i = -\sqrt{p_i} \, (a-\lambda\,\unit)\cdot\phi_i \ .
\end{align}
Since $-\sqrt{p_i}<0$ and $\varphi$ is positive semi-definite, it follows that $a\cdot\phi_i=\lambda\,\phi_i$.
\end{proof}

From the perspective of Proposition~\ref{prop:eigenJordan}, pure states are of the form $\varphi = \phi_i\otimes\phi^\dag_i$ (no sum on $i$), with density matrix $\rho_\psi=\varphi^2=\varphi$.

\subsection*{\underline{\sf Dynamics}}

The original motivation for the introduction of the Jordan product of operators was that it maps observables to observables~\cite{Jordan32,Jordan:1933vh}: The Jordan product of Hermitean matrices is again Hermitean. More generally any real polynomial in observables is again an observable. In the following we consider a special class of examples.

Starting from two observables $y$ and $z$ in the Jordan algebra $\CA = \MM_n(\FC)$, we can construct a \emph{bona fide} nonassociative observable 
\begin{align}
H = \hbar\,\varpi\, \reallywidehat{\big(z\,(y\,y)-(z\,y)\,y\big)} \label{eq:bonafide}
\end{align}
in the enveloping algebra $\CCU(\CA)=\MM_{n^2}(\FC)$ as Hamiltonian, where $\hbar\,\varpi$ is a constant with units of energy. In traditional associative quantum mechanics this expression would of course be zero, but in nonassociative quantum mechanics it is non-trivial in general. Using the Jordan product \eqref{eq:Jordanprod}, it can be checked that $H=H^*$. In fact, using \eqref{eq:JordanNA} we can write the Hamiltonian in the form
\begin{align} \label{eq:JordanH}
H = \tfrac14\,\hbar\,\varpi\,\reallywidehat{[[z,y]_\cdot,y]_\cdot} \ .
\end{align}

\begin{example} \label{ex:Jordan1}
Let $(x,y,z) = (\sigma_1,\sigma_2,\sigma_3)$ be the $2{\times}2$ Pauli spin matrices. They satisfy the defining relations of the Lie algebra $\mathfrak{su}(2)$ and of the Clifford algebra ${\sf Cl}(3)$ in three dimensions:
\begin{align} \label{eq:sigmarels}
[\sigma_i,\sigma_j]_\cdot = 2\,\ii\,\epsilon_{ijk}\,\sigma_k
\qquad \text{and} \qquad 
\{\sigma_i,\sigma_j\}_\cdot = 2\,\delta_{ij}\,\unit \ .
\end{align}
Let $\CA=\MM_2(\FC)$ be the Jordan algebra generated by $\sigma_1$, $\sigma_2$, $\sigma_3$ and the unit $\unit$. From \eqref{eq:Jordanprod} and \eqref{eq:sigmarels} it follows that the Jordan product sees only the Clifford algebra structure of the Pauli matrices through 
\begin{align} \label{eq:Jordansigma}
\sigma_i\,\sigma_j = \delta_{ij}\,\unit
\end{align}
in $\CA$, while it is blind to the Lie algebra structure. 

From \eqref{eq:JordanH} the Hamiltonian is
\begin{align} \label{eq:HhatZ}
H = \hbar\,\varpi\, \widehat z = \tfrac12\,\hbar\,\varpi\,\big(z\otimes\unit + \unit\otimes z^\top\big) \ .
\end{align}
Its action on a general element
\begin{align}
\psi = a\,\unit + b\,x + c\,y + d\,z
\end{align}
of $\MM_2(\FC)$, with $a,b,c,d\in\FC$, can be computed from \eqref{eq:Jordansigma} which yields the relations $z\,\unit=z$, $z\,x=0$,  $z\,y=0$ and $z\,z=\unit$. 

We find
\begin{align}
\begin{split}
H\,\triangleright\,\psi &= \hbar\,\varpi\,z\,\psi \\[4pt]
&= \hbar\,\varpi\,(a\,z + d\,\unit) = \ii\,\hbar\,(\dot a\,\unit + \dot b\, x + \dot c\,y + \dot d\,z) \ ,
\end{split}
\end{align}
where an overdot represents time derivative and we used the Schr\"odinger equation \eqref{eq:Schrodinger}. The solution is given by
\begin{align}
a(t) = A\cos(\varpi\,t+ \phi) \quad , \quad b(t) = B \quad , \quad c(t) = C \quad , \quad d(t) = -\ii\,A\sin(\varpi\,t+\phi)
\end{align}
with constants $A,B,C,D\in\FC$ and $\phi\in[0,2\pi)$.
\end{example}

Even though the Jordan algebra $\CA$ is commutative and hence its commutators vanish, the Heisenberg equation \eqref{eq:dOdt} is non-trivial. The reason is of course that the associative enveloping algebra $\CCU(\CA)$ is noncommutative. Hence in our approach there is  no need to consider generalized Heisenberg equations of motion such as \eqref{eq:H1H2}. 

In particular, consider \eqref{eq:dOdt} with Hermitean Hamiltonian $H=\widehat h$ and an observable $O=\widehat{o} \in\obs(\CCU(\CA))$ which are images of elements $h,o\in\CA$. We need the operator commutator, which we can get from \eqref{eq:Jordancomp} as
\begin{align}
[H,O]_\circ = \tfrac14\, \big([h,o]_\cdot\otimes\unit + \unit\otimes[h^\top,o^\top]_\cdot\big) \ .
\end{align}
Note that the transposes of observables are also observables. 

\begin{example}
Let us again choose \eqref{eq:HhatZ} as Hamiltonian. From the Heisenberg equation  \eqref{eq:dOdt} we find that the observable $\widehat z$ is a constant of the motion and that the observables $a\, \widehat x + b\,\widehat y$ for $a,b\in\FR$ lead to oscillatory behaviour analogously to Example~\ref{ex:Jordan1}. The details are left as an easy exercise for the reader.
\end{example}

\section{Octonions}
\label{sec:oct}

A richer example is provided by the eight-dimensional (complex) nonassociative algebra of octonions $\CA=\mathbbm{O}$. It is generated by the real unit $e_0=\unit$ and the seven imaginary unit octonions $e_i$ with the relations
\begin{align} \label{eq:octrels}
e_i\,e_j = -\delta_{ij}\,e_0 + \eta_{ijk} \, e_k \ ,
\end{align}
for $0<i,j,k\leqslant 7$, where $\eta_{ijk}$ are the octonionic structure constants which are completely skew-symmetric and whose only non-vanishing values up to permutation of indices are
\begin{align} \label{eq:etadef}
\eta_{123} = \eta_{145} = \eta_{176} = \eta_{246} = \eta_{257} = \eta_{347} = \eta_{365} = 1 \ .
\end{align}
It is naturally a $*$-algebra with $*$-involution given by
\begin{align}
e_0^* = e_0 \qquad \text{and} \qquad e_i^* = -e_i \ ,
\end{align}
for $0<i\leqslant 7$. 

In the following we write
\begin{align}
\CN_0 = \big\{(123)\,,\,(145)\,,\,(176)\,,\,(246)\,,\,(257)\,,\,(347)\,,\,(365)\big\}
\end{align}
for the set of triples of integers appearing in \eqref{eq:etadef}, and
\begin{align}
\CN_\pm = \big\{\big(\sigma(i)\,\sigma(j)\,\sigma(k)\big) \ \big| \ (ijk)\in\CN_0 \ , \ \sigma\in S_3 \ , \ {\rm sgn}(\sigma)=\pm 1\big\}
\end{align}
for the set of all even/odd permutations of these triples. Then
\begin{align}
\CN = \CN_+\cup\CN_-
\end{align}
is the set of all triples of distinct non-zero integers appearing as labels in the definition of the octonion multiplication table.

\subsection*{\underline{\sf Enveloping algebra}}

The structure of the enveloping algebra $\CCU(\FO)$ follows from the basic left module actions in the regular birepresentation
\begin{align} \label{eq:octmodule}
\widehat e_i\,\triangleright\, e_0 = \widehat e_0\,\triangleright\, e_i = e_i \qquad \text{and} \qquad \widehat e_i\,\triangleright\, e_j = -\delta_{ij} + \eta_{ijk}\,e_k  \ ,
\end{align}
for $i,j\in\{1,\dots,7\}$. This defines an $8{\times}8$ matrix representation $\varPi:\FO\longrightarrow\MM_8(\FC)$ of the octonions by left multiplication on themselves; in this representation we regard the octonion units $e_\mu$ for $\mu\in\{0,1,\dots,7\}$ as the standard unit vectors in $\FC^8$. The composition product is given by matrix multiplication
\begin{align}
\widehat e_\mu\circ\widehat e_\nu = \varPi(e_\mu)\cdot\varPi(e_\nu) \ ,
\end{align}
while the $\ast$-structure is represented by Hermitean conjugation ${}^\dag$. 

In the following we use the notation 
\begin{align}
\unit=\varPi(e_0) \qquad \text{and} \qquad E_i=\varPi(e_i)
\end{align}
for the images of the octonion generators in this representation. The representation by right multiplication is then obtained from this representation by Hermitean conjugation $E_i'=E_i^\dag$.

It is useful to realise the enveloping algebra $\CCU(\FO)$ as a subalgebra of the group of $8{\times}8$ signed permutation matrices~\cite{Kerber71}. Abstractly the latter is the \emph{signed symmetric group} $\RZ_2\,\wr\, S_8$ of degree eight, i.e.~the external wreath product of the cyclic group $\RZ_2=\{\pm1\}$ of order two with the symmetric group $S_8$, acting naturally by permutations $\sigma$ of the set $\{-8,\dots,-1,1,\dots,8\}$ such that $\sigma(-i)=-\sigma(i)$ for all $i$. We write $(j\,k)_-$ for the signed two-cycle whose only non-trivial actions are given by $(j\,k)_-(\pm\,j)=\mp\,k$ and $(j\,k)_-(\pm\,k)=\mp\,j$.

The octonions act on themselves via left multiplication as signed permutations, which for each $i\in\{1,\dots,7\}$ can be expressed in cycle notation as
\begin{align} \label{eq:signed2cycles}
(0\,i)_-\,(j_1k_1)_-\,(j_2k_2)_-\,(j_3k_3)_- \ ,
\end{align}
where $(ij_lk_l)\in \CN_+$ for $l=1,2,3$. Here $(j_lk_l)_-$ is the signed two-cycle representing the octonionic products $e_i\,e_{j_l} = e_{k_l}$ and $e_i\,e_{k_l} = -e_{j_l}$, whereas $(0\,i)_-$ represents $e_i\,e_0 = e_i$ and $e_i^2=-e_0$. 

The $8{\times}8$ matrices $E_i$ are all signed permutation matrices, which in the cycle decomposition \eqref{eq:signed2cycles} have non-zero matrix elements $(E_i)_{i\,0}=1=-(E_i)_{0\,i}$ and $(E_i)_{k_lj_l}=1=-(E_i)_{j_lk_l}$ for $l=1,2,3$.
They are skew-symmetric and traceless, and in fact also orthogonal. Similarly, the matrices $E_i\cdot E_j$ are again signed permutations that can each be expressed as a product of four distinct signed two-cycles as in \eqref{eq:signed2cycles}.

\begin{example}
The signed permutations generated by left multiplication with the imaginary unit octonions $e_1$ and $e_2$ are respectively given by
\begin{align}
(0\,1)_-\,(2\,3)_-\,(4\,5)_-\,(7\,6)_- \qquad \text{and} \qquad (0\,2)_-\,(3\,1)_-\,(4\,6)_-\,(5\,7)_- \ .
\end{align}
The corresponding signed permutation matrices are
\begin{align}
E_1 = {\scriptstyle\begin{pmatrix}
0 & -1 & 0 & 0 & 0 & 0 & 0 & 0 \\
1 & 0 & 0 & 0 & 0 & 0 & 0 & 0 \\
0 & 0 & 0 & -1 & 0 & 0  & 0 & 0 \\
0 & 0 & 1 & 0 & 0 & 0 & 0 & 0 \\
0 & 0 & 0 & 0 & 0 & -1 & 0 & 0 \\
0 & 0 & 0 & 0 & 1 & 0 & 0 & 0 \\
0 & 0 & 0 & 0 & 0 & 0 & 0 & 1 \\
0 & 0 & 0 & 0 & 0 & 0 & -1 & 0
\end{pmatrix}} 
\quad \text{and} \quad
E_2 = {\scriptstyle\begin{pmatrix}
0 & 0 & -1 & 0 & 0 & 0 & 0 & 0 \\
0 & 0 & 0 & 1 & 0 & 0 & 0 & 0 \\
1 & 0 & 0 & 0 & 0 & 0  & 0 & 0 \\
0 & -1 & 0 & 0 & 0 & 0 & 0 & 0 \\
0 & 0 & 0 & 0 & 0 & 0 & -1 & 0 \\
0 & 0 & 0 & 0 & 0 & 0 & 0 & -1 \\
0 & 0 & 0 & 0 & 1 & 0 & 0 & 0 \\
0 & 0 & 0 & 0 & 0 & 1 & 0 & 0
\end{pmatrix}}  \ .
\end{align}
\end{example}

For $(ijk)\in \CN$, define the \emph{phase matrix}
\begin{align}
P_{ijk} := E_i\cdot E_j\cdot E_k \ .
\end{align}
This is a traceless diagonal matrix with entries $-1$ at positions $0$, $i$, $j$ and $k$, and $+1$ at the remaining four positions. This simply expresses the fact that any triad of octonions indexed by the set $\CN$, together with the unit element, generate a subalgebra of $\FO$ isomorphic to the quaternion algebra $\FH$, and that the octonions can be obtained from the quaternions via the Cayley-Dickson construction. The $-1$ entries of $P_{ijk}$ represent the fact that the product of the three imaginary unit quaternions is equal to minus the unit element, whereas the $+1$ entries represent the orthogonal space to $\FH$ in $\FO$. There is similarly a product expression $E_1\cdot E_2\cdot \, \cdots \, \cdot E_7=-\unit$.

This characterizes the associative enveloping algebra $\CCU(\FO)$ as the unital subalgebra of $\MM_8(\FC)$ cut out by the relations
\begin{align}
E_i\cdot E_j = -\delta_{ij}\,\unit - \eta_{ijk}\,P_{ijk}\cdot E_k \quad \text{(no sum on $j,k$)} \ ,
\end{align}
where $(ijk)\in \CN$. The original octonion algebra is recovered by acting on $e_0$, i.e. via projection onto the zeroth column: $E_j\cdot e_0 = e_j$, which then yields the relations \eqref{eq:octrels}.

One consequence of these relations is the well-known result that left multiplication of the octonions on themselves generates (the complexifications of) the Lie algebra $\mathfrak{so}(8)$ as well as the Clifford algebra ${\sf Cl}(8)$ in eight dimensions. The permutation matrices $E_i$ form seven of the $28$ generators of $\mathfrak{so}(8)$, while the remaining $21$ generators are the permutation matrices $E_i\cdot E_j=-E_j\cdot E_i$ with $i\neq j$. We may also deduce the equation
\begin{align}
\bar E_\mu\cdot E_\nu + \bar E_\nu\cdot E_\mu = 2\,\delta_{\mu\nu}\,\unit
\end{align}
for $\mu,\nu\in\{0,1,\dots,7\}$, where $\bar E_0=E_0$ and $\bar E_i = -E_i$. Then the $16{\times}16$ matrices
\begin{align}
\varGamma_\mu = \begin{pmatrix}
0 & E_\mu \\ \bar E_\mu & 0
\end{pmatrix}
\end{align}
generate the Clifford algebra ${\sf Cl}(8)$. 

Similarly to Proposition~\ref{prop:Jordancompalg}, the octonionic descriptions of $\mathfrak{so}(8)$ and ${\sf Cl}(8)$ in fact extend to

\begin{proposition} \label{prop:octonionM8}
The octonion enveloping algebra generates all complex $8{\times}8$ matrices:
\begin{align}
\CCU(\mathbbm{O}) = \mathbbm{M}_8(\FC) \ .
\end{align}
\end{proposition}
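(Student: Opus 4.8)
The plan is to show that the associative algebra generated inside $\MM_8(\FC)$ by $\unit$ and the seven signed permutation matrices $E_1,\dots,E_7$ is all of $\MM_8(\FC)$. The cleanest route is an irreducibility argument combined with a dimension count, but since the matrices have a very explicit combinatorial description one can also argue constructively, exactly in the style of the proof of Proposition \ref{prop:Jordancompalg}: exhibit enough products of the $E_i$ to recover all matrix units $e_{\mu\nu}$ with $0\leqslant\mu,\nu\leqslant 7$.

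First I would record the two families of elements already identified in the text: the off-diagonal signed permutations $E_i$ and $E_i\cdot E_j$ ($i\neq j$), which together span a copy of $\mathfrak{so}(8)$ (dimension $28$), and the diagonal phase matrices $P_{ijk}=E_i\cdot E_j\cdot E_k$ for $(ijk)\in\CN$. The key first step is to show the phase matrices, together with $\unit$, span the full algebra of diagonal matrices $\MM_8(\FC)_{\mathrm{diag}}\cong\FC^8$. Each $P_{ijk}$ is the diagonal matrix with $-1$ in positions $0,i,j,k$ and $+1$ elsewhere, i.e. $\unit - 2\,(e_{00}+e_{ii}+e_{jj}+e_{kk})$; so spanning $\{\unit\}\cup\{P_{ijk}\}$ is equivalent to spanning $\{e_{00}+e_{ii}+e_{jj}+e_{kk} : (ijk)\in\CN\}$. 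Since the triples $\{0,i,j,k\}$ running over $(ijk)\in\CN$ are precisely the seven "quaternionic" $4$-subsets of $\{0,\dots,7\}$ containing $0$ (the lines of the Fano plane through the point $0$, with $0$ adjoined), a direct linear-algebra check — which I would phrase as: the incidence matrix of these seven $4$-sets in the $7$ coordinates $\{1,\dots,7\}$ is the point-line incidence matrix of the Fano plane, which is invertible over $\FC$ — shows that the $e_{ii}$ for $1\leqslant i\leqslant 7$ (hence also $e_{00}$) all lie in the span. Thus all diagonal matrix units $e_{\mu\mu}$ are in $\CCU(\FO)$.

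Once all $e_{\mu\mu}$ are available, the rest is immediate: for any $\mu\neq\nu$, pick $i$ with $E_i$ having a nonzero entry in position $(\mu,\nu)$ — such an $i$ exists because the seven permutations $E_1,\dots,E_7$ acting on $\{0,\dots,7\}$ move $0$ to every other point (the $(0\,i)_-$ cycles) and move each imaginary index to all others (the Fano-line two-cycles), so the supports of $E_1,\dots,E_7$ collectively cover every off-diagonal position $(\mu,\nu)$ — and then $e_{\mu\mu}\cdot E_i\cdot e_{\nu\nu}=\pm\,e_{\mu\nu}$ lies in $\CCU(\FO)$. Hence all $64$ matrix units $e_{\mu\nu}$ belong to $\CCU(\FO)$, so $\CCU(\FO)=\MM_8(\FC)$. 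The bulk of the work, and the only place a genuine computation is needed, is the invertibility of the Fano incidence matrix in the diagonal step; everything else is bookkeeping about which off-diagonal slots the $E_i$ occupy. Alternatively, to avoid even that, one could invoke Burnside's theorem: $\CCU(\FO)$ acts irreducibly on $\FC^8$ (an invariant subspace would have to be invariant under the transitive group generated by the $E_i$ and under the diagonal phase matrices, and a short argument rules out any proper such subspace), whence $\CCU(\FO)=\MM_8(\FC)$ by dimension. I would present the constructive Fano-plane argument as the main proof and mention the Burnside alternative in a remark.
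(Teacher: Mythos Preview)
Your proposal is correct and follows essentially the same two-step strategy as the paper: first use the diagonal phase matrices $P_{ijk}$ (together with $\unit$) to produce all rank-one diagonal projectors $e_{\mu\mu}$, then multiply by the $E_i$ to obtain all off-diagonal matrix units. The paper's proof is a two-sentence sketch that says the rank-one projectors arise from ``sums and products'' of the $P_{ijk}$, whereas you make the diagonal step precise via the invertibility of the Fano-plane incidence matrix (so that already \emph{linear combinations} of $\unit$ and the seven $P_{ijk}$ span the diagonal algebra); your Burnside alternative is not in the paper but is a legitimate shortcut.
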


\begin{proof}
While the only projectors in a division algebra such as $\FO$ are trivial, it is not difficult to construct  projectors of rank one in the enveloping algebra $\CCU(\FO)$ onto each of the unit octonions from $\unit$ together with sums and products of the phase matrices $P_{ijk}$ for $(ijk)\in \CN$. Multiplying these projectors by $E_i$ then gives the basis of $8{\times}8$ matrix units for $\MM_8(\FC)$.
\end{proof}

\begin{remark}
In quantum mechanics, the projectors constructed in the proof of Proposition~\ref{prop:octonionM8} can be used for projective measurements, while $\ii\,E_i$ provide nice examples of observables.
\end{remark}

\subsection*{\underline{\sf Tracial GNS representation}}

From \eqref{eq:octrels} it easy to see that there is a unique normalized $2$-cyclic trace which is defined on generators by
\begin{align}
\tau(e_0) = 1 \qquad \text{and} \qquad \tau(e_i) = 0 \ ,
\end{align}
for $0<i\leqslant 7$. It is straightforward to check that this is positive and $3$-cyclic as well.
The corresponding ideal of zero norm elements is again trivial:
\begin{align}
\CJ_\tau = \big\{a\in\mathbbm{O} \ \big| \ |a|^2=0\big\} = \{0\} \ ,
\end{align}
where we write $a=a_\mu\,e^\mu\in\mathbbm{O}$ with $a_\mu\in\FC$ for $\mu\in\{0,1,\dots,7\}$, and $|\,\cdot\,|$ is the standard norm on~$\FC^8$. 

Hence the tracial GNS Hilbert space can be identified with $\mathbbm{O}\simeq\FC^8$ as a vector space:
\begin{align}
\CH_\tau = \FC^8 \ ,
\end{align}
equipped with the standard inner product $(\,\cdot\,,\,\cdot\,)_\tau$. From the basic module actions \eqref{eq:octmodule} we can compute the $8{\times}8$ matrices generating the tracial GNS representation of the enveloping algebra as
\begin{align}
\pi_\tau(\widehat e_0) = \unit \qquad \text{and} \qquad \pi_\tau(\widehat e_i) = E_i \ ,
\end{align}
where $i\in\{1,\dots,7\}$, with
\begin{align}
\tau(A)=\pi_\tau(A)_{00}
\end{align}
for all $A\in\CCU(\FO)=\MM_8(\FC)$.

\subsection*{\underline{\sf Tracial uncertainties}}

Let us now look at a purely nonassociative phenomenon from the octonion algebra. Consider the unital subalgebra of the enveloping algebra $\CCU(\FO)$ generated by $\unit=\widehat e_0$ together with 
\begin{align}
A=\ii\,\widehat e_7 \qquad \text{and} \qquad B = \widehat e_1\circ \widehat e_2\circ \widehat e_4 \ .
\end{align}
These are observables: $A=A^*$ and $B=B^*$, which square to the unit element: $A\circ A=B\circ B=\unit$.

Consider the element $E$ of this subalgebra defined by
\begin{align}
A\circ B =: \ii\,E =-B\circ A \ .
\end{align}
It is also an observable which squares to the unit element: $E=E^*$ and $E\circ E=\unit$. It follows that
\begin{align}
[A,B]_\circ = 2\,\ii\,E \ .
\end{align}
The observable $E$ is normalized, $\tau(E)=1$, and hence
\begin{align} \label{eq:nonzerotrace}
\tau([A,B]_\circ) = 2\,\ii\,\tau(E) = 2\,\ii \neq 0 \ .
\end{align}
Thus we find that the trace of a finite-dimensional operator commutator is non-zero, providing an explicit realisation of the loss of cyclicity discussed in Remark~\ref{rem:losscyclicity}, which is not possible in the associative setting.

By Lemma~\ref{lem:uncert}, the corresponding uncertainty relation in the tracial state reads
\begin{align}
\Delta_\tau A \ \Delta_\tau B \geqslant \tfrac12\,\big|\big\langle[A,B]_\circ\big\rangle_\tau\big| = 1 \ .
\end{align}
We can compute the uncertainties for measurement of the observables $A$ and $B$ in the tracial state using $\tau(A)=\tau(B)=0$ to get
\begin{align}
\Delta_\tau A=\sqrt{\tau(A\circ A)} = 1 \qquad \text{and} \qquad \Delta_\tau B=\sqrt{\tau(B\circ B)} = 1 \ .
\end{align}
It follows that the tracial state is a minimum uncertainty state, which would again be impossible in any associative treatment of quantum mechanics. The triple $(A,B,E)$ can be interpreted as providing single qubit nonassociative quantum gates, analogous to the Pauli $(X,Y,Z)$ gates of traditional quantum mechanics, except that here they act in a reducible eight-dimensional representation with $\tau(E)\neq0$.

\begin{remark}
These considerations nicely illustrate the difference between nonassociative quantum mechanics with octonions and ordinary quantum mechanics with an eight-dimensional Hilbert space. In the latter case the $8{\times}8$ matrix point of view would ignore the nonassociative operator origin and thus any special meaning to the original operators. Moreover, the states are quite different in the two cases. In particular, in eight-dimensional quantum mechanics the tracial state would be the normalized matrix trace, whereas for our formulation of nonassociative quantum mechanics based on the octonions it is the projection onto the unit element $e_0$ (i.e. the $(00)$-matrix element).
\end{remark}

\section{Conclusions}
\label{sec:conclusion}

In applications to quantum mechanics, elements of the nonassociative algebra $\CA$ play a dual role: as `state vectors' (tracial GNS construction,  \S\ref{sec:trace}), in which case we will often denote elements of $\CA$ suggestively as $\psi\in\CA$, and also as basic nonassociative observables as in the examples of \S\ref{sec:Intro}. On the other hand, elements of the enveloping algebra $\CCU(\CA)$ play the role of `operators', including more general observables. Since $\CA$ embeds into $\CCU(\CA)$ as a vector space, elements of $\CA$ also have an interpretation in terms of operators; this is the reason for the suggestive notation $\widehat a\in\CCU(\CA)$ for $a\in\CA$. It will be this latter picture that captures truly nonassociative features in our algebraic formulation of quantum theory: the nonassociative algebra $\CA$ together with additional data (see \S\ref{sec:trace}) completely define the theory.

The enveloping algebra $\CCU(\CA)$ setting is required for positive dynamics and a consistent probability interpretation as discussed in section \ref{sec:CPM}. (Previous alternative proposals like the one quoted in \eqref{eq:H1H2} can be tweaked to preserve reality and normalization, but not positivity, i.e.\ such alternative proposals  inadvertently lead to negative probabilities and are therefore not compatible with a quantum mechanical probability interpretation.) While $\CCU(\CA)$ is associative by construction, it does not remove interesting nonassociative effects: There are basic non-associative observables like e.g.\ \eqref{eq:Jacjm} and \eqref{eq:bonafide} that would be zero in an associative setting. Furthermore there are bonafide nonassociative effects like the non-vanishing trace of commutators in a finite-dimensional setting and the minimum uncertainty of the tracial state as described in \S\ref{sec:oct} , as well as effects like volume uncertainties and space coarse-graining that have been discussed in previous work. 
The methods developed in this work can for instance be used to study the quantum mechanics of charged particles in the presence of magnetic sources, non-geometric flux backgrounds in string theory, and even more exotic quantum algebras or operator systems.

\subsection*{\underline{\sf Acknowledgements}}

We thank Martin Cederwall and Bernd Henschenmacher for helpful discussions and correspondence. We are grateful to the Erwin Schr\"odinger International Institute for Mathematics and Physics (ESI), the Mainz Institute for Theoretical Physics (MITP) of the DFG Cluster of Excellence PRISMA${}^+$ (Project ID 39083149), the Ru\dj er Bo\v{s}kovi\'c Institute (RBI),  The Galileo Galilei Institute for Theoretical Physics (GGI), the Department of Mathematical Sciences of Durham University, and the Mathematical Institute of Charles University for their hospitality and support during  various stages of this work. This article is based upon work from COST Actions CaLISTA CA21109 and THEORY-CHALLENGES CA22113 supported by COST (European Cooperation in Science and Technology). 

\subsection*{\underline{\sf Statements and Declarations}}

The authors have no competing interests to declare that are relevant to the content of this article. Data sharing is not applicable to this article as no datasets were generated or analyzed during the current study.

\end{document}